\documentclass[submission,copyright,creativecommons]{eptcs}
\usepackage{breakurl}             
\usepackage{underscore}           

\DeclareMathAlphabet{\mathcal}{OMS}{cmsy}{m}{n}
\usepackage{amssymb}
\usepackage{amsthm} 
\usepackage{amsmath}
\usepackage{amsfonts}
\usepackage{tikz-cd}
\usepackage{bussproofs}
\usepackage{mathrsfs}
\usepackage{mathtools}
\usepackage{array}
\usepackage{extarrows}
\usepackage[b]{esvect}
\tikzset{%
    symbol/.style={%
        draw=none,
        every to/.append style={%
            edge node={node [sloped, allow upside down, auto=false]{$#1$}}}
    }
}
\makeatletter
\newsavebox{\@brx}
\newcommand{\llangle}[1][]{\savebox{\@brx}{\(\m@th{#1\langle}\)}%
\mathopen{\copy\@brx\kern-0.5\wd\@brx\usebox{\@brx}}}
\newcommand{\rrangle}[1][]{\savebox{\@brx}{\(\m@th{#1\rangle}\)}%
\mathclose{\copy\@brx\kern-0.5\wd\@brx\usebox{\@brx}}}
\makeatother


\newtheorem{theorem}{Theorem}[section]
\newtheorem{lemma}[theorem]{Lemma}
\newtheorem{proposition}[theorem]{Proposition}
\newtheorem{corollary}[theorem]{Corollary}

\newtheorem{definition}[theorem]{Definition}

\theoremstyle{remark}

\newtheorem{example}[theorem]{Example}


\renewcommand{\H}{\mathcal H}

\newcommand{\Q}{\mathcal Q}

\newcommand{\V}{\mathcal V}
\newcommand{\W}{\mathcal W}
\newcommand{\X}{\mathcal X}
\newcommand{\Y}{\mathcal Y}
\newcommand{\Z}{\mathcal Z}

\newcommand{\CC}{\mathbb C}

\newcommand{\NN}{\mathbb N}

\newcommand{\RR}{\mathbb R}

\newcommand{\fF}{\mathfrak F}



\renewcommand{\:}{\colon}
\newcommand{\To}{\rightarrow}

\newcommand{\subsetof}{\subseteq}

\newcommand{\suchthat}{\,|\,}

\newcommand{\tensor}{\otimes}
\newcommand{\iso}{\cong}

\newcommand{\union}{\cup}

%


\newcommand{\Tr}{\mathrm{Tr}}
\newcommand{\At}{\mathrm{At}}

\newcommand{\Eval}{\mathrm{Eval}}
\newcommand{\fix}{\mathrm{fix}}

\newcommand{\Alg}{\mathbf{OpAlg}}
\newcommand{\Set}{\mathbf{Set}}
\newcommand{\POS}{\mathbf{POS}}
\newcommand{\CPO}{\mathbf{CPO}}
\newcommand{\qRel}{\mathbf{qRel}}
\newcommand{\qSet}{\mathbf{qSet}}
\newcommand{\qPOS}{\mathbf{qPOS}}

\newcommand{\qCPO}{\mathbf{qCPO}}

\newcommand{\Rel}{\mathbf{Rel}}



\DeclareMathAlphabet{\mathpzc}{OT1}{pzc}{m}{it}





\title{Quantum CPOs}
\author{Andre Kornell
\institute{Tulane University\\ Louisiana, USA}
\email{akornell@tulane.edu}
\and
Bert Lindenhovius 
\institute{Tulane University\\
Louisiana, USA}
\email{alindenh@tulane.edu}
\and
Michael Mislove\institute{Tulane University\\
Louisiana, USA}
\email{mislove@tulane.edu}
}

\allowdisplaybreaks

\begin{document}

\maketitle

\begin{abstract}
We introduce the monoidal closed category $\qCPO$ of \emph{quantum cpos}, whose objects are `quantized' analogs of $\omega$-complete partial orders (cpos). The category $\qCPO$ is enriched over the category $\CPO$ of cpos, and contains both $\CPO$, and the opposite of the category $\mathbf{FdAlg}$ of finite-dimensional von Neumann algebras as monoidal subcategories. 
We use $\qCPO$ to construct a sound model for the quantum programming language Proto-Quipper-M (PQM) extended with term recursion, as well as a sound and computationally adequate model for the Linear/Non-Linear Fixpoint Calculus (LNL-FPC), which is both an extension of the Fixpoint Calculus (FPC) with linear types, and an extension of a circuit-free fragment of PQM that includes recursive types.
\end{abstract}

\section{Introduction}\label{sec:intro}

This is a progress report on an ongoing project to develop semantic models for quantum programming languages, with a particular focus supporting term recursion and type recursion. We introduce the category $\qCPO$ of `quantum cpos' and Scott continuous maps, which can be regarded as a `quantized' version of $\CPO$, the category of $\omega$-complete partial orders (cpos) and Scott continuous maps. \textbf{CPO} is a crucial building block for denotational models of classical programming languages with recursive types~\cite{lehman-smyth}; similarly, we show denotational models for quantum programming languages with recursive types can be built using $\qCPO$: we prove $\qCPO$ supports a sound and computationally adequate model of  Proto-Quipper-M (PQM) \cite{pqm-small} extended with term recursion, as well as sound model of the Linear/Non-Linear Fixpoint Calculus (LNL-FPC)~\cite{LMZ20} that is computationally adequate at non-linear types.

The models of high-level functional quantum programming languages such as PQM are based on \emph{linear / nonlinear models}, which arise from the work of Benton~\cite{benton-small,benton-wadler} on models of the linear lambda calculus. More precisely, a linear/non-linear model consists of a non-linear Cartesian closed category $\mathbf C$, a symmetric monoidal closed linear category $\mathbf L$, and a strong monoidal functor $F\colon \mathbf C\to\mathbf L$ that has a right adjoint:
\begin{equation}\label{diag:LNL}
\begin{tikzcd}
\mathbf{C}\ar[r,bend left,"F",""{name=A, below}] & \mathbf{L}\ar[l,bend left,"G",""{name=B,above}] \ar[from=A, to=B, symbol=\dashv]
\end{tikzcd}
\end{equation}

A result in~\cite{LMZ18} shows that the semantic models of PQM all are suitable linear/non-linear models, and results in~\cite{LMZ18} and \cite{LMZ20} show how to add term recursion and type recursion, respectively, to such models. 

Abstract LNL models are suitable for a wide range of circuit description languages beyond those aimed at quantum devices. Examples include nonlinear languages that also support linear types, such as recent work on languages supporting session types for concurrency~\cite{Pfenning}.
There also have been efforts to construct more concrete models for quantum programming languages based on a linear category closer to quantum computation. The category $\mathbf{FdAlg}$ of finite-dimensional 
von Neumann algebras and unital $*$-homomorphisms is one such linear category: any quantum circuit is a sequence of unitary operators followed by a measurement, so the category $\mathbf{FdHilb}$ of finite dimensional Hilbert spaces is an obvious example of the linear category $\mathbf{M}$ which forms the basis for the model of quantum programming languages like PQM in~\cite{pqm-small}. 
But there is no adjunction between $\mathbf{FdHilb}$ and $\mathbf{Set}$, so one has to embed it in an appropriate linear category to form an LNL model for PQM.  In fact, $\mathbf{FdHilb}$ embeds contravariantly in $\qCPO$, which also is a $\CPO$ enriched category; this opens the door to using $\qCPO$ as a concrete model for quantum programming languages like PQM that also supports recursion.  

The categories $\qCPO$ and $\qCPO_{\perp!}$ (the subcategory of \textbf{qCPO} of pointed objects and strict maps) are completely analogous to $\CPO$ and $\CPO_{\perp !}$ in the nonlinear setting. In fact, the diagram below consists of the left adjoints of a commuting square of symmetric monoidal adjunctions:
\begin{equation}\label{diag:lift quote}
\begin{tikzcd}
\CPO_{\perp!}
\arrow{r}{`(-)}
&
\qCPO_{\perp!}
\\
\CPO
\arrow{u}{(-)_\perp}
\arrow{r}[swap]{`(-)}
&
\qCPO
\arrow{u}[swap]{(-)_\perp}
\end{tikzcd}
\end{equation}
Moreover, $\qCPO$ is symmetric monoidal closed, and the horizontal left adjoints in the above diagram are both inclusion functors. What's more, 
the `lifting' functor $(-)_\perp:\qCPO\to\qCPO_{\perp!}$ in the diagram restricts to the ordinary lifting functor $(-)_\perp:\CPO\to\CPO_{\perp!}$. Combining this lift with the inclusion $\CPO\to\qCPO$ gives us a linear/non-linear model: 
\begin{equation}\label{qCPO model}
\begin{tikzcd}
\phantom{\mathbf{q}}\CPO\ar[r,bend left,"F",""{name=A, below}] & \qCPO_{\perp!}\ar[l,bend left,"G",""{name=B,above}] \ar[from=A, to=B, symbol=\dashv]
\end{tikzcd}
\end{equation}

We now can state our main results, Theorems \ref{thm:algebraically compact}, \ref{thm:LNL-FPC} and \ref{thm:PQM} below. Following~\cite{fiore-thesis}, we view a \emph{type constructor} as a bifunctor on the category of types, and then Theorem~\ref{thm:algebraically compact} states that all type constructors on $\qCPO_{\perp!}$ are parametrically algebraically compact. It follows that $\qCPO_{\perp,!}$ supports recursive types defined by its type constructors. Theorem~\ref{thm:LNL-FPC} then states that (\ref{qCPO model}) is a sound model for LNL-FPC (and hence supports recursive types) that is computationally adequate at nonlinear types, while Theorem~\ref{thm:PQM} says (\ref{qCPO model}) also is a sound model for Proto-Quipper-M that contains $\mathbf{FdAlg}^{\mathrm{op}}$ as a monoidal subcategory of the linear category $\qCPO_{\perp!}$. Importantly, to our knowledge, (\ref{qCPO model}) is the only known LNL model to satisfy all these properties.

\subsection{Related work} 
Typical quantum programming languages such as Proto-Quipper-M and its relatives \cite{pqm-small}, the quantum lambda calculus \cite{selingervaliron:quantumlambda}, and QWire \cite{qwire}, all have models consisting of a linear/non-linear adjunction, i.e., a monoidal adjunction as given in (\ref{diag:LNL}), where the symmetric monoidal category $\mathbf L$ contains a suitable monoidal subcategory representing quantum circuits, such as the category $\mathbf{FdAlg}$ of finite-dimensional von Neumann algebras. Models for PQM were given first in \cite{pqm-small} and then in \cite{LMZ18}. A model of the quantum lambda calculus in terms of Lafont categories (which also are linear/non-linear models \cite{mellies:categoricalmodelsofLL}) was given in  \cite{quant-semantics}. Another model was given in \cite{ChoWesterbaan16}. Models for QWire are given in \cite{ewire-lmcs}, including descriptions of their relations to linear/non-linear models. 

We discuss some of these models in more detail. The category $\CPO$ of $\omega$-complete partial orders (cpos) and Scott continuous maps, and its subcategory $\CPO_{\perp!}$ of pointed cpos and strict maps play a fundamental role in the semantics of programming languages supporting recursion~\cite{fiore-plotkin}. Using this approach, PQM was extended with term recursion in~\cite{LMZ18}, and soundness was established for the linear/nonlinear pre\-sheaf model
\begin{equation}\label{presheaf model}
    \CPO\rightleftarrows[\mathbf{FdAlg},\CPO_{\perp!}].
\end{equation}

The article \cite{LMZ20} focused on extending PQM with recursive types, but considered only the circuit-free fragment of the language, dubbed LNL-FPC, because it extends FPC with linear types. That paper also includes a computational adequacy result at non-linear types for abstract models of LNL-FPC. 

Because the tensor product of $[\mathbf{FdAlg},\CPO_{\perp!}]$ fails to satisfy a crucial hypothesis, the computational adequacy proof in~\cite{LMZ20} does not apply to this presheaf model.
In fact, the only concrete model that satisfies all the required conditions for adequacy in~\cite{LMZ20} is
\begin{equation}\label{CPO model}
\CPO\rightleftarrows\CPO_{\perp!},
\end{equation}
but this model does not support quantum circuits since $\mathbf{FdAlg}$ does not embed in $\CPO_{\perp!}$.

The model of the quantum lambda calculus given in  \cite{quant-semantics} is sound, computationally adequate at the unit type, and  supports term recursion. However, its construction is intricate, so an alternative model was proposed in \cite{ChoWesterbaan16}:
\begin{equation}\label{WStar model}
\Set\rightleftarrows\mathbf{WStar}^{\mathrm{op}}_{\mathrm{NMIU}}\rightleftarrows\mathbf{WStar}^{\mathrm{op}}_{\mathrm{NCPSU}},
\end{equation}
where $\mathbf{WStar}_{\mathrm{NMIU}}$ denotes the category of von Neumann algebras and normal $*$-homomorphisms, and $\mathbf{WStar}_{\mathrm{NCPSU}}$ denotes the category of von Neumann algebras and normal completely positive subunital maps. 
This model (\ref{WStar model}) is natural because it utilizes von Neumann algebras, which are commonly used in physics to model quantum systems. 
However, this model is enriched over $\Set$, not $\CPO$, and it can be shown that the comonad induced by the adjunction between $\Set$ and $\mathbf{WStar}_{\mathrm{NMIU}}^{\mathrm{op}}$
cannot be algebraically compact, which implies the standard proof relying on algebraically compact type constructors cannot be used to prove this model supports recursive types.

The only proof we know that an LNL model supports recursion requires  
the Cartesian closed category of the model to be $\CPO$, and  the linear category to be $\CPO$-enriched. 
To apply this approach to model (\ref{WStar model}) requires replacing $\Set$ with $\CPO$, but the functors in the model
don't admit obvious extensions. Instead, our solution is to restrict the class of von Neumann algebras in the model, and to equip them with a generalized cpo structure, which we call a quantum cpo.

\subsection{Overview of the Rest of the Paper}
The remainder of the paper lays out the theory of quantum cpos, their relation to classical cpos, and the basic categorical results that are needed to use quantum cpos as semantic models.
We begin with quantum sets \cite{Kornell18}, which describe combined classical/quantum systems that are discrete in the sense that every complete Boolean algebra of propositions is isomorphic to a power set. The results in~\cite{Kornell18} draw heavily on the work in~\cite{kuperbergweaver:quantummetrics} and \cite{Weaver10}.

We next consider quantum posets and the morphisms between them. Quantum sets can be endowed with the quantum analog of a \emph{partial order} -- a reflexive, antisymmetric transitive relation.
This notion is derived from the notion of a binary relation between quantum sets, essentially the quantum relations of Weaver \cite{Weaver10}.
By formulating an appropriate quantum generalization of families of monotonically increasing sequences, we then are able to define quantum cpos as a subclass of quantum posets, and Scott continuous functions as a subclass of monotone functions between them.

The presentation then describes the monoidal closed categories $\qCPO$ and $\qCPO_{\perp!}$. We state all the properties that make these categories useful as models for Proto-Quipper-M, and likely  for the semantics of arbitrary quantum programming languages as well. Finally, we conclude with a short discussion of intended future work. 

Because of space limitations, most proofs are omitted.

\section{Quantum Sets and Relations}\label{section qsr}
We briefly review the essential notions of \cite{Kornell18}. A quantum set is a collection of finite-dimensional Hilbert spaces, e.g., $\X = \{\CC^2, \CC^5\}$. We call these Hilbert spaces the atoms of $\X$. \emph{Intuitively}, a quantum set $\X$ is not an ordinary set, so we prefer to write $X \in \At(\X)$, rather than $X \in \X $, when $X$ is an atom of $\X$. Thus, $\X$ is a quantum set, whereas $\At(\X)$ is an ordinary set, that is, a set in the ordinary sense, although $\X$ and $\At(\X)$ are formally equal. This choice of notation affects the meaning of our terminology and expressions; for example, we will see that $\ell^\infty(\X)$ means something other than $\ell^\infty(\At(\X))$. This convention is convenient for navigating the dictionary of quantum generalizations.

Two canonical quantum sets are $\mathbf 0 = \emptyset$, and $\mathbf 1 = \{\CC\}$. The Cartesian product of two quantum sets $\X$ and $\Y$ consists of tensor product Hilbert spaces: $\X \times \Y = \{X \tensor Y \suchthat X \in \At(\X), \, Y \in \At(\Y)\}$. Each quantum set $\X$ has a dual $\X^* = \{X^* \suchthat X \in \At(\X)\}$, which consists of dual Hilbert spaces.

A quantum set with a single atom is called \emph{atomic}. A quantum binary relation from an quantum set $\{H\}$ to an atomic quantum set $\{K\}$ is just a space of operators from $H$ to $K$, i.e., a subspace of $L(H,K)$. More generally, a quantum binary relation $R$ from a quantum set $\X$ to a quantum set $\Y$ is a matrix of such subspaces; formally, $R$ is a choice of subspaces $R(X,Y) \subsetof L(X, Y)$, for each $X \in \At(\X)$, and each $Y \in \At(\Y)$.

Quantum sets and binary relations form a dagger-compact category $\qRel$. The composition of binary relations then amounts to matrix multiplication, and as a matrix, the identity binary relation $I$ has subspaces $\CC \cdot 1$ down the diagonal, and zero subspaces off the diagonal. The adjoint binary relation $R^\dagger$ from $\Y$ to $\X$ is defined by $R^\dagger(Y,X) = \{ F^\dagger\mid F\in R(X,Y)\}$, for $X\in \At(\X)$ and $Y\in \At(\Y)$. 

The operator spaces $L(X,Y)$ are Hilbert spaces, so the binary relations from $\X$ to $\Y$ form an orthomodular lattice \cite{Sasaki54}. The bottom element of this orthomodular lattice is $\bot_{\X, \Y}$, defined by $\bot_{\X,\Y}(X, Y) = 0$. The top element of this orthomodular lattice is $\top_{\X,\Y}$, defined by $\top_{\X,\Y}(X, Y) = L(X,Y)$.
Two binary relations $R$ and $S$ are said to be \emph{orthogonal}, synonymously \emph{disjoint}, just in case they are orthogonal in each entry, i.e., just in case $\mathrm{Tr}_X (S(X, Y)^\dagger \cdot R(X,Y)) = 0$ for all $X \in \At(\X)$ and $Y \in \At(\Y)$.

Each ordinary set $S$ is identified with a quantum set $`S$, obtained by replacing each element of $S$ with a one-dimensional Hilbert space. The space of linear operators from one one-dimensional Hilbert space to another is itself one-dimensional, so a quantum binary relation from $`S$ to $`T$ is corresponds to an ordinary binary relation from $S$ to $T$. Thus, we have an inclusion functor from the category of ordinary sets and ordinary binary relations, to the category of quantum sets and quantum binary relations. It preserves the obvious dagger-compact structure on the former category, and it is essentially surjective onto those quantum sets whose atoms are all one-dimensional, which we term \emph{classical}.

A \emph{function} $F\colon \X\to \Y$ is simply a binary relation $F$ from $\X$ to $\Y$ satisfying $F \circ F^\dagger \leq I$ and $F^\dagger \circ F \geq I$; these are the same conditions that characterize which ordinary relations $f$ are functions, if one interprets $F^\dagger$ as the converse $f^{-1}$. Together with the Cartesian product of quantum sets defined above, the category $\qSet$ of quantum sets and functions is monoidal closed. Then the inclusion functor from the category $\Set$ of ordinary sets and ordinary functions into $\qSet$ is strong monoidal, and thus $\Set$ and $\qSet$ form a linear/non-linear model~\cite{benton-big,benton-small} with $\qSet$ as the linear category.

The category of quantum sets and functions is also dual to a category $\Alg$ of operator algebras. Up to isomorphism, the objects of $\Alg$ are von Neumann algebras of the form $\bigoplus_{\alpha \in I} M_{n_\alpha}(\CC)$, and the morphisms of $\Alg$ are unital normal $*$-homomorphisms. The contravariant equivalence takes each quantum set $\X$ to the von Neumann algebra $\ell^\infty(\X) := \bigoplus_{X \in \At(\X)} L(X)$.
Each function $F\colon \X\to\Y$ corresponds to a unital normal $*$-homomorphism $F^\star\colon \ell^\infty(\Y)\to \ell^\infty(\X)$.

Normal states on $\ell^\infty(\X)$ are intuitively probability distributions on $\X$. A normal state $\mu$ on $\ell^\infty(\X)$ can be pushed forward along a function $F$ from $\X$ to $\Y$, using the contravariant equivalence of the previous paragraph. Explicitly, the pushforward of $\mu$ is the normal state $\mu \circ F^\star$ on $\ell^\infty(\Y)$. If $\X = `S$ and $\Y = `T$, for sets $S$ and $T$, then this corresponds to the usual pushforward of a probability distribution.

A normal state $\mu$ on $\ell^\infty(\X)$ can also be described by an element $m \in \ell^1(\X) := \bigoplus_{X \in \At(\X)} L(X)$, using the expected formula $\mu(a) = \Tr(a\cdot m)$, for $a \in \ell^\infty(\X)$. This trace is defined by  $\Tr(a) = \sum_{X \in \At(\X)} \Tr(a(x))$, for $a \in \ell^\infty(\X)$. The operator $m$ is a choice of density matrices, but normalized so that their traces sum to $1$. The pushforward of $\mu$ can be calculated from $m$ using Kraus operators. The contribution of the density matrix of $\mu$ at the atom $X \in \At(\X)$, to the density matrix of $\mu \circ F^\star$ at the atom $Y \in \At(\Y)$ is given by the expression $\dim(Y) \cdot \sum_{v \in B} v \cdot m(X) \cdot v^\dagger$, where $B$ is any basis for $F(X,Y)$. Indeed, $F(X,Y)$ is canonically a Hilbert space for the inner product $\langle f_1 | f_2 \rangle = \Tr(f_1 \cdot f_2^\dagger)$.

\section{Modeling physical systems}
Finitary physical types are naturally modelled by quantum sets. For example, the qubit is modelled by the quantum set $\H_2$, whose only atom is a two-dimensional Hilbert space. By contrast, the classical bit is modelled by the quantum set $`\{0,1\}$, which has two one-dimensional atoms. The `memory' of an idealized quantum computer might consist of finitely many qubits, and finitely many bits, so we can model such a quantum computer as a composite physical system.

The Cartesian product of ordinary sets generalizes to a symmetric monoidal structure $\times$ on the categories $\mathbf{qRel}$ and $\mathbf{qSet}$. We recall that the monoidal product of two quantum sets is defined by forming binary tensor products of their atoms. Composite systems consisting of two fully quantum systems, that is, of quantum systems each modelled by a single Hilbert space, are obtained by forming the tensor product of the two Hilbert spaces. Similarly, composite systems consisting of two fully classical systems, that is, of quantum systems modelled by ordinary sets, are obtained by forming the ordinary Cartesian product of the two sets. Thus, our generalized product models composite systems consisting entirely of fully quantum systems, or entirely of fully classical systems. In fact, it is appropriate for modelling mixed quantum/classical systems as well. In particular, an idealized quantum computer possessing $n$ qubits and $m$ bits can be modelled by the quantum set $ \underbrace{\H_2 \times \cdots \times \H_2}_{n} \times \underbrace{`\{0,1\} \times \cdots \times `\{0,1\}}_{m}.$
\vspace{-2.5ex}
\noindent This quantum set has $2^m$ atoms, each of dimension $2^n$.

Other natural type constructors are also easily modelled in quantum sets. Sum types are modelled by disjoint unions of quantum sets, portraying a kind of classical disjunction of potentially quantum systems. The resulting physical system may be in a pure state of the first type or in a pure state of the second type, but no superpositions may occur. For example, an idealized quantum computer possessing a single qubit and a single bit is modelled by a quantum set with two atoms, because the computer can be in a configuration where the bit has value $0$, and it can be in a configuration where the bit has value $1$, but it cannot be in a superposition between two such states. 

Higher types are modelled by quantum function sets \cite[Definition 9.2]{Kornell18}, the inner hom objects of the closed monoidal category $\qSet$. These quantum sets are more challenging to describe, mainly due to the large automorphism group of the physical qubit. The one-dimensional atoms of the quantum function set $\H_2^{\H_2}$ are in canonical bijective correspondence with the automorphisms of $\H_2$ in $\qSet$, i.e., with the $*$-automorphisms of $M_2(\CC)$. The quantum function set $\H_2^{\H_2}$ has higher-dimensional atoms as well. In general, the quantum function set from a quantum set $\X$ to a quantum set $\Y$ has an atom of dimension $d$ for each unital normal $*$-homomorphism $\rho\: \ell^\infty(\Y) \To \ell^\infty(\X) \overline \otimes L(H_d)$\footnote{Here $\overline \otimes$ denotes the spatial tensor product of von Neumann algebras \cite[Definition IV.1.3]{takesaki:oa1}.} that is irreducible in the sense that $0$ and $1$ are the only projections of the form $1 \tensor p$ that commute with the image of $\rho$.

The bang operator $!$ applied to a quantum set extracts the quantum subset consisting of its one-dimensional atoms; this is essentially an ordinary set. This also is the largest subset of the original quantum set that admits a duplication map. Indeed, classical sets all admit duplication maps, but the no-cloning theorem forbids the duplication of higher dimensional atoms. Returning to the example of the previous paragraph, every atom of $!(\H_2^{\H_2})$ is one-dimensional, and furthermore $!(\H_2^{\H_2}) \iso `\mathbf{qSet}(\H_2, \H_2)$.

Each quantum gate or measurement may be modelled by a function between quantum sets. Such a function is analogous to a function between the configuration spaces of two classical systems. A function between configuration spaces induces an map from the states on the domain system to the states on the codomain system. The function pushes probability measures on the domain system forward onto the codomain system. The standard formalizations of quantum gates and measurements as maps on state spaces are analogous to these pushforward maps. Examples \ref{Hadamard} and \ref{measurement} below exhibit the formalizations of the Hadamard gate and of qubit measurement respectively by functions between quantum sets. Each function induces a map on states, and in Example \ref{channel} below, we examine this map in the case of qubit measurement, recovering the expected probability distributions on experimental outcomes.

\begin{example}\label{Hadamard}
Quantum gates are automorphisms of fully quantum systems, which consist of qubits. Such an automorphism is typically formalized by a unitary operator. In \textbf{qSet}, such an automorphism is formalized by a closely-related function. Indeed, each function $F_1\: \H_d \To \H_d$, for any positive integer $d$, is defined by $F_1(H_d, H_d) = \CC\cdot u$, for some unitary operator $u$. For example, the Hadamard gate is formalized by the function $F_1\: \H_2 \To \H_2$ defined by $$F_1(H_2, H_2) = \CC \cdot \left[ \begin{matrix} 1 & 1 \\ 1 & -1 \end{matrix} \right].$$
This function $F_1$ is an automorphism of $\H_2$ in $\qSet$. 
The unital normal $*$-homomorphism $M_2(\CC) \iso \ell^\infty(\H_2) \To \ell^\infty(\H_2) \iso M_2(\CC)$ that corresponds to this function, in the sense of the duality between quantum sets and hereditarily atomic von Neumann algebras \cite{Kornell18}, is simply conjugation by the Hadamard matrix above, appropriately normalized.

\end{example}

\begin{example}\label{measurement}
Measurement is a channel from a fully quantum system to a fully classical system; it may be formalized by a function from an atomic quantum set to a classical quantum set. The effect of this function on states yields probability distributions on experimental outcomes. In this example we exhibit the function, and in the next example, we examine its effect on states.

The definition of a function between quantum sets implies functions from an atomic quantum set $\H_d$ to a classical quantum set $`S$ are in bijective correspondence with projection-valued measurements on $\H_d$. Explicitly, $F_2(H_d, \CC_s) = L(H_d, \CC_s) \cdot  p_s$ for each $s \in S$, where $(p_s\suchthat s \in S)$ is a projective POVM.\footnote{POVM stands for \emph{Positive Operator-Valued Measurement}.} In particular, the standard measurement of a qubit is formalized by a function $F_2\colon \H_2\to `\{1, -1\}$ defined by
$$F_2(H_2, \CC_s) = \begin{cases} \CC \cdot [\begin{matrix} 1 & 0 \end{matrix} ] & {s = 1}; \\ 
\CC \cdot [\begin{matrix} 0 & 1 \end{matrix}]  & s = {-1}.    \end{cases}$$
This function $F_2$ is an epimorphism in $\qSet$. Equivalently, it is surjective in the sense that $F_2 \circ F_2^\dagger \geq I$.
The unital normal $*$-homomorphism $\CC^2 \iso \ell^\infty(\{1, -1\}) \To \ell^\infty(\H_2) \iso M_2(\CC)$ corresponding to this function, in the sense of the duality between quantum sets and hereditarily atomic von Neumann algebras \cite{Kornell18}, includes $\CC^2$ into $M_2(\CC)$ diagonally. \end{example}

\begin{example}\label{channel}
A function between quantum sets is also a quantum channel in the familiar mathematical sense: normal states are pushed forward to normal states, as described in section \ref{section qsr}, and the function is completely determined by this mapping. In the case of the measurement channel,  $\{[\begin{matrix} 1 & 0 \end{matrix} ]\}$ is an orthonormal basis for $F_2(H_2, \CC_1)$, and  $\{[\begin{matrix} 0 & 1 \end{matrix} ]\}$ is an orthonormal basis for $F_2(H_2, \CC_{-1})$. A state on $\ell^\infty(\H_2) = M_2(\CC)$ is a density matrix $m = \left[\begin{matrix} m_{11} & m_{12} \\ m_{21} & m_{22}\end{matrix}\right]$, while a state on $\ell^\infty(`\{1,-1\}) \iso \CC \oplus \CC$ is a pair of positive numbers $p_1$ and $p_{-1}$ that sum to $1$. Pushing forward the density matrix $m$ in the prescribed way, we find that 
$$ 
p_1 = \dim(\CC_1) \cdot [\begin{matrix} 1 & 0 \end{matrix} ] \cdot\left[\begin{matrix} m_{11} & m_{12} \\ m_{21} & m_{22}\end{matrix}\right]\cdot \left[\begin{matrix} 1 \\ 0 \end{matrix} \right] = m_{11}, ~~ \text{and} ~~
p_2 = \dim(\CC_{-1}) \cdot [\begin{matrix} 0 & 1 \end{matrix} ] \cdot\left[\begin{matrix} m_{11} & m_{12} \\ m_{21} & m_{22}\end{matrix}\right]\cdot \left[\begin{matrix} 0 \\ 1 \end{matrix} \right] = m_{22}.
$$
In this way we recover the probabilities of our two experimental outcomes from $F_2$.

\end{example}

\section{Quantum posets}

\begin{definition}

A \emph{quantum poset} is a pair $(\X,R)$ consisting of a quantum set $\X$ and a relation $R\in\qRel(\X,\X)$ satisfying: (1) $I_\X\leq R$ (reflexivity); (2) $R\circ R\leq R$ (transitivity); and (3)  $R\wedge R^\dag \leq I_\X$ (antisymmetry).
(This definition of a quantum poset is essentially that of Weaver \cite[Definition 2.6]{Weaver10}.)

A \emph{monotone} map $F:(\X,R)\to(\Y,S)$ is simply a function $F:\X\to\Y$ satisfying $F\circ R\leq S\circ F$.\footnote{This is analogous to the condition $(f\times f)\circ (\leq_P)\ \subseteq\ (\leq_Q)\circ (f\times f)$ which characterizes when a map $f\colon (P,(\leq_P))\to (Q,(\leq_Q))$ between ordinary posets is monotone.} We denote the category of quantum posets with monotone maps by $\qPOS$.
\end{definition}

\begin{example}
Let $\X$ be a quantum set. Then $I_\X$ is a quantum order on $\X$, which we call the \emph{trivial} order.
\end{example}

If $(S,\sqsubseteq)$ is an ordinary poset, then $(`S,`\!\!\sqsubseteq)$ is a quantum poset, and vice versa; for example, the trivial order $\sqsubseteq$ on $S$ corresponds to $`(\sqsubseteq)\,=\, I_{`S}$, the trivial order on $`S$. And, a monotone map $f$ between ordinary posets gives rise to a monotone function  $`f$ between the associated quantum posets, and vice versa. It follows that $`(-)$ extends to a fully faithful functor $\POS\to\qPOS$.

\begin{example}\label{ex:non-classical quantum poset}
A  `non-classical' quantum poset is given by the relation $R$ on $\H_2$ specified by \[
R(H_2,H_2)=\CC\begin{pmatrix}
1 & 0 \\
0 & 1
\end{pmatrix} +\CC \begin{pmatrix}
0 & 1 \\
0 & 0
\end{pmatrix}.\]
Since $\H_2$ has only one atom $H_2$, $R$ is determined by $R(H_2,H_2)$. Then $(\H_2,R)$ is a quantum poset. 
\end{example}

The partial orders that appear in recursion theory are often viewed as \emph{information orderings}. For example, a partial function on the natural numbers can be viewed as one step in the construction of a total function, i.e., as a partially specified total function. In this sense, an extension of a given partial function carries more information about the total function that it purports to describe.

A preorder structure on the phase space of a classical system can be viewed as formalizing some entropic process undergone by the system. More precisely, whereas a stochastic matrix records the transition probabilities of the process, the preorder structure simply records which transitions are possible. A lower configuration in the order is the result of more transitions, and thus carries less information about the initial configuration of the system. 

The same intuition is available in the quantum setting. A preorder structure on a quantum set $\X$ may be understood as an information order on that quantum set not only by analogy with the classical case, but also directly. A stochastic channel on $\H_d$, formally a completely positive map of the appropriate kind, determines a preorder on $\H_d$, essentially the unital subalgebra of $L(H_d)$ generated by the adjoints of its Kraus operators (c.f.~\cite{Weaver19}). We work with the algebra generated by the adjoints because the information order is opposite to the transition order; later states carry less information.

Any partial order on a finite set can be encoded as the possible transitions of some stochastic matrix, but the same is not true in the quantum case. For instance, it is easy to show that the partial order on $\H_2$ defined in Example \ref{ex:non-classical quantum poset} does not arise in this way. This phenomenon is intuitively related to the fact that some partial functions between quantum sets cannot be extended to total functions. If we allow partial channels, which are permitted to consume their input without producing an output, then the partial order in \ref{ex:non-classical quantum poset} arises from the partial channel $\Phi$:
$$\Phi(\rho) = \frac 1 2 \rho + \frac 1 2 v \rho v^* \qquad \text{where} \qquad v =
\left(
\begin{matrix}
0 & 0 \\
1 & 0
\end{matrix}
\right)
$$

In both the classical and quantum cases, it is appropriate to identify configurations that are equivalent in the information they carry when modelling recursion. In the classical case, we turn the preordered set into a partially ordered set by taking a quotient. This quotient construction has a quantum analog, which is most easily viewed using Weaver's characterization of binary relations between quantum sets \cite{Weaver10}.

Thus, quantum posets, and therefore also quantum cpos, can be regarded as quantum systems equip\-ped with a stochastic channel that models a kind of decay, and therefore a kind of information order. 

More elaborate examples of quantum posets can be described using the coproduct and the monoidal product on $\qPOS$.
The monoidal product $(\X,R)\times(\Y,S)$ of $(\X,R)$ and $(\Y,S)$ in $\qPOS$ is the quantum poset $(\X\times\Y,R\times S)$, where $\X\times\Y$ is the monoidal product of objects in $\qRel$, and $R \times S$ is the monoidal product of morphisms in $\qRel$. The embedding $`(-):\POS\to\qPOS$ is strong monoidal, and has a right adjoint.

Similarly, the coproduct $(\X,R)\uplus(\Y,S)$ of $(\X,R)$ and $(\Y,S)$ is the quantum poset $(\X\uplus\Y,R\uplus S)$, where $\X\uplus \Y$ is the coproduct of objects in $\qRel$, as well as in $\qSet$, and $R\uplus S$ is the coproduct of morphisms in $\qRel$. The generalization to coproducts of arbitrary families is immediate. 

The limit of a diagram $D:A\to\qPOS$ can be formed as follows: For each $\alpha\in A$, let $D(\alpha)=(\X_\alpha,R_\alpha)$. Let $\X$ be the limit of the $\X_\alpha$ in $\qSet$, and let $J_\alpha:\X\to\X_\alpha$ be the limiting maps. Then $R=\bigwedge_{\alpha\in A}J_\alpha^\dag\circ R_\alpha\circ J_\alpha$ is a partial order on $\X$, and $(\X,R)$ is the limit of $D$ in $\qPOS$.

A partial order $R$ on a quantum set $\X$ also imposes a partial order on the set of functions into $\X$ from any fixed quantum set $\W$: if $F$ and $G$ are functions $\W \To \X$, we define $F \sqsubseteq G$ iff $G \leq R \circ F$; this is completely analogous to the classical setting where $g\subseteq \leq \circ f$ expresses $g\leq f$. 

\begin{lemma}\label{lem:order enrichment}
Let $(\X,R)$ be a quantum poset, and let $\W$ be any quantum set. Then set $\qSet(\W, \X)$ of functions from $\W$ to $\X$ is partially ordered by $F\sqsubseteq G$ iff $G \leq R \circ F$.
\end{lemma}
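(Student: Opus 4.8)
The plan is to verify the three partial-order axioms for $\sqsubseteq$ on $\qSet(\W,\X)$ directly, using only the defining inequalities of a quantum poset, the characterization of a function ($F\circ F^\dag\leq I$ and $F^\dag\circ F\geq I$), and two elementary properties of $\qRel$ recalled above: composition of binary relations is monotone in each variable, and the dagger is an order-preserving involution. Throughout I write $\circ$ for relational composition and use that $\leq$ is itself a partial order on each hom-poset $\qRel(\W,\X)$.

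Reflexivity and transitivity are immediate. Since $I_\X\leq R$, we get $F = I_\X\circ F\leq R\circ F$, i.e. $F\sqsubseteq F$. If $F\sqsubseteq G$ and $G\sqsubseteq H$, then $G\leq R\circ F$ and $H\leq R\circ G$, so $H\leq R\circ G\leq R\circ R\circ F\leq R\circ F$ by transitivity of $R$ and monotonicity of composition, hence $F\sqsubseteq H$.

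The substantive step is antisymmetry; this is where some care is needed, essentially to recover a ``pointwise'' comparison of $F$ and $G$ from the relational inequality $G\leq R\circ F$. Suppose $F\sqsubseteq G$ and $G\sqsubseteq F$, i.e. $G\leq R\circ F$ and $F\leq R\circ G$. Post-composing the first with $F^\dag$ and using $F\circ F^\dag\leq I$ gives $G\circ F^\dag\leq R$; post-composing the second with $G^\dag$ and using $G\circ G^\dag\leq I$ gives $F\circ G^\dag\leq R$, and daggering this (so that $(F\circ G^\dag)^\dag = G\circ F^\dag$) gives $G\circ F^\dag\leq R^\dag$. Hence $G\circ F^\dag\leq R\wedge R^\dag\leq I_\X$ by antisymmetry of $R$. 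Then $F^\dag\circ F\geq I_\W$ yields $G = G\circ I_\W\leq G\circ F^\dag\circ F\leq I_\X\circ F = F$, and the symmetric computation --- or daggering $G\circ F^\dag\leq R$ and $G\circ F^\dag\leq R^\dag$ to get $F\circ G^\dag\leq I_\X$, then using $G^\dag\circ G\geq I_\W$ --- gives $F\leq G$. Therefore $F = G$.

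I do not expect any genuine obstacle beyond this bookkeeping of composition orders and the correct use of the $\qRel$ structure at each move. For a cleaner write-up one can first note that, for functions $F,G\colon\W\to\X$, the condition $F\sqsubseteq G$ is \emph{equivalent} to $G\circ F^\dag\leq R$ (the forward implication is the first computation above; the converse follows by post-composing $G\circ F^\dag\leq R$ with $F$ and using $F^\dag\circ F\geq I_\W$). Under this reformulation, reflexivity, transitivity and antisymmetry of $\sqsubseteq$ reduce transparently to reflexivity ($I_\X\leq R$), transitivity ($R\circ R\leq R$) and antisymmetry ($R\wedge R^\dag\leq I_\X$) of $R$ respectively, together with the function identities for $F$ and $G$.
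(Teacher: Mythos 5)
Your proof is correct and follows essentially the same route as the paper's: reflexivity and transitivity come directly from $I_\X \leq R$ and $R\circ R\leq R$, and antisymmetry is obtained by converting $F\sqsubseteq G$ into $G\circ F^\dag\leq R$ via the function inequalities, combining with the dagger to land in $R\wedge R^\dag\leq I_\X$, and then recovering $F=G$. The equivalence $F\sqsubseteq G\Leftrightarrow G\circ F^\dag\leq R$ that you mention at the end as an optional reformulation is in fact exactly how the paper opens its proof, so the two arguments coincide.
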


It follows that $\qPOS(\X,\Y)$ inherits the order from its superset $\qSet(\X,\Y)$ for any quantum posets $\X$ and $\Y$. This order does not depend on the order on $\X$, which is also what happens classically. 
\begin{lemma}
$\qPOS$ is order enriched.
\end{lemma}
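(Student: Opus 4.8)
The plan is to check the two ingredients of $\POS$-enrichment. That each hom-set $\qPOS(\X,\Y)$ is a poset has effectively been settled already: by Lemma~\ref{lem:order enrichment} the set $\qSet(\X,\Y)$ is partially ordered by $F\sqsubseteq G$ iff $G\le S\circ F$ (where $S$ is the order on $\Y$), and $\qPOS(\X,\Y)$ sits inside $\qSet(\X,\Y)$ as a sub-poset. The unit maps $\mathbf 1\to\qPOS(\X,\X)$ selecting $I_\X$ are monotone for trivial reasons, and the associativity and unit laws are just those of the underlying category $\qPOS$. So the only thing that genuinely requires proof is that composition $\qPOS(\Y,\Z)\times\qPOS(\X,\Y)\to\qPOS(\X,\Z)$ is a monotone map of posets.

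Since a map out of a product of posets is monotone precisely when it is monotone in each argument separately, it suffices to fix monotone maps $F,F'\colon(\X,R)\to(\Y,S)$ with $F\sqsubseteq F'$ and $G,G'\colon(\Y,S)\to(\Z,T)$ with $G\sqsubseteq G'$, and show $G\circ F\sqsubseteq G'\circ F'$. Unwinding the definitions, the hypotheses read $F'\le S\circ F$ and $G'\le T\circ G$, and the goal is $G'\circ F'\le T\circ(G\circ F)$. I would establish this by the chain
\begin{equation*}
G'\circ F'\ \le\ G'\circ S\circ F\ \le\ T\circ G'\circ F\ \le\ T\circ T\circ G\circ F\ \le\ T\circ G\circ F,
\end{equation*}
whose four steps are justified, in order, by: monotonicity of composition in $\qRel$ applied on the left of $G'$ to $F'\le S\circ F$; monotonicity of $G'$, i.e.\ $G'\circ S\le T\circ G'$, composed on the right with $F$; the hypothesis $G'\le T\circ G$ composed on the left with $T$; and transitivity $T\circ T\le T$ of the order on $\Z$, composed on the right with $G\circ F$ (all up to associativity of composition). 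Since $\qPOS$ is a category, $G\circ F$ and $G'\circ F'$ are genuine monotone maps, so the displayed inequality is exactly the assertion $G\circ F\sqsubseteq G'\circ F'$ in $\qPOS((\X,R),(\Z,T))$.

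There is no deep obstacle here; the one point to keep straight is that the order $\sqsubseteq$ on a hom-set is phrased in terms of the order on the \emph{codomain}, so an inequality arising from $F\sqsubseteq F'$ — which lives over $\Y$ — must be transported along $G'$ before it can bear on a statement over $\Z$; this is precisely what the first two steps of the chain do, after which transitivity of $T$ absorbs the extra factor. One should also note, if it has not been recorded already, that composites of monotone maps are monotone (so that $\qPOS$ is a category and the composite relations above are legitimate $\qPOS$-morphisms): this is the same two-move computation, namely $G\circ F\circ R\le G\circ S\circ F\le T\circ G\circ F$, using monotonicity of $F$, monotonicity of $G$, and monotonicity of composition in $\qRel$.
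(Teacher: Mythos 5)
Your proof is correct, and since the paper omits the proof of this lemma entirely, there is nothing to compare it against except the intended standard argument, which is exactly what you give: hom-posets come from Lemma~\ref{lem:order enrichment} by restriction, and monotonicity of composition follows from the chain $G'\circ F'\le G'\circ S\circ F\le T\circ G'\circ F\le T\circ T\circ G\circ F\le T\circ G\circ F$. Your closing remark about why the inequality over $\Y$ must be transported along $G'$ before transitivity of $T$ can absorb the extra factor is a nice touch, and your observation that closure of monotone maps under composition needs the same two moves is also correct and worth recording.
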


The category $\textbf{qPOS}$ has all the categorical properties of the category $\textbf{POS}$:
\begin{theorem}\label{thm:qPOS properties}
The category $\qPOS$ is complete, has all coproducts, and is monoidal closed. The functor $`(-):\POS\to\qPOS$ is strong monoidal and has a right adjoint.
\end{theorem}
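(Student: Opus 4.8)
The statement to prove is Theorem~\ref{thm:qPOS properties}: that $\qPOS$ is complete, has all coproducts, is monoidal closed, and that $`(-)\colon\POS\to\qPOS$ is strong monoidal with a right adjoint. Most of the ingredients have already been sketched in the paragraphs leading up to the theorem, so my plan is essentially to assemble and verify them carefully, with the monoidal-closed part being where the real work lies.

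**Completeness and coproducts.** For limits I would simply verify the construction given in the excerpt: given $D\colon A\to\qPOS$ with $D(\alpha)=(\X_\alpha,R_\alpha)$, form the limit $\X$ of the underlying quantum sets in $\qSet$ with limiting maps $J_\alpha\colon\X\to\X_\alpha$, and set $R=\bigwedge_{\alpha\in A}J_\alpha^\dagger\circ R_\alpha\circ J_\alpha$. I need to check three things: (i) $R$ is a quantum order — reflexivity follows since each $J_\alpha^\dagger\circ I_{\X_\alpha}\circ J_\alpha \geq J_\alpha^\dagger\circ J_\alpha \geq I_\X$ (using that $J_\alpha$ is a function) and the meet of relations $\geq I_\X$ is again $\geq I_\X$; transitivity follows from the fact that $G\mapsto J_\alpha^\dagger\circ G\circ J_\alpha$ is lax monoidal-ish for composition together with $R\leq J_\alpha^\dagger R_\alpha J_\alpha$ for each $\alpha$; antisymmetry follows because $R\wedge R^\dagger \leq J_\alpha^\dagger R_\alpha J_\alpha \wedge J_\alpha^\dagger R_\alpha^\dagger J_\alpha$ and one pushes the meet inside. (ii) Each $J_\alpha$ is monotone from $(\X,R)$ to $(\X_\alpha,R_\alpha)$, i.e.\ $J_\alpha\circ R\leq R_\alpha\circ J_\alpha$; this is immediate from $R\leq J_\alpha^\dagger R_\alpha J_\alpha$ and $J_\alpha\circ J_\alpha^\dagger\leq I$. (iii) Universality: a cone of monotone maps $K_\alpha\colon(\Z,T)\to(\X_\alpha,R_\alpha)$ induces a unique function $K\colon\Z\to\X$ in $\qSet$, and one checks $K\circ T\leq R\circ K$ by testing against each $J_\alpha$ and using monotonicity of the $K_\alpha$. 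Coproducts are handled the same way but more easily, since the underlying $\qSet$/$\qRel$ coproduct is a disjoint union and the relation $R\uplus S$ is block-diagonal; reflexivity, transitivity, antisymmetry and the universal property are checked blockwise. Arbitrary coproducts are the evident generalization.

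**Monoidal closed structure — the main obstacle.** The monoidal product $(\X,R)\times(\Y,S)=(\X\times\Y,R\times S)$ is easy: $R\times S$ is a quantum order because $\times$ is a monoidal functor on $\qRel$ that preserves $\leq$, $I_{\X\times\Y}=I_\X\times I_\Y$, and the dagger distributes over $\times$, so reflexivity/transitivity/antisymmetry transfer directly; the unit is $(\mathbf 1, I_{\mathbf 1})$ and the coherence isomorphisms are those of $\qSet$, which are automatically monotone. The genuinely hard part is constructing the internal hom $[(\X,R),(\Y,S)]$ and verifying the tensor–hom adjunction. The underlying quantum set should be the quantum function set $\Y^{\X}$ of $\qSet$; the order on it must be the "pointwise" order induced by $S$. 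Concretely, I expect to define a relation $\llbracket R,S\rrbracket$ on $\Y^\X$ using the evaluation function $\ev\colon\Y^\X\times\X\to\Y$ and the demand that it be the largest relation making $\ev$ monotone in its first argument (with $\X$ carried along trivially, or with $R$), mirroring the classical formula where $[P,Q]$ carries the order $f\sqsubseteq g \iff \forall x.\,f(x)\sqsubseteq_Q g(x)$. The key steps are: (a) show this $\llbracket R,S\rrbracket$ is a quantum order (antisymmetry is the delicate clause and should follow from antisymmetry of $S$ together with the fact that $\ev$ is jointly "surjective enough" — this is where I anticipate the most friction, and I would lean on Weaver's matrix description of quantum relations and on Lemma~\ref{lem:order enrichment} applied with $\W=\X$); (b) show that currying/uncurrying in $\qSet$ restricts to monotone maps, i.e.\ $\Lambda(f)$ is monotone $(\Z,T)\to[(\X,R),(\Y,S)]$ iff $f$ is monotone $(\Z,T)\times(\X,R)\to(\Y,S)$; (c) conclude the natural bijection $\qPOS((\Z,T)\times(\X,R),(\Y,S))\cong\qPOS((\Z,T),[(\X,R),(\Y,S)])$, naturality being inherited from $\qSet$. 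Since $\qSet$ is already known to be monoidal closed, all the coherence and naturality conditions reduce to the monotonicity bookkeeping just described.

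**The functor $`(-)$.** That $`(-)\colon\POS\to\qPOS$ is well-defined and fully faithful was already observed in the excerpt; strong monoidality amounts to the canonical isos $`(P\times Q)\cong `P\times `Q$ and $`\mathbf 1\cong(\mathbf 1,I_{\mathbf 1})$ being monotone iso's, which follows because $`(-)$ on $\qSet$ is strong monoidal and these isos respect the product orders. For the right adjoint, I would construct it explicitly: send a quantum poset $(\X,R)$ to the ordinary poset whose underlying set is the set $\qSet(\mathbf 1,\X)$ of "points" (equivalently, by the bang/duality discussion, $\At$ of the classical part $!\X$), ordered by $x\sqsubseteq y \iff y\leq R\circ x$ as in Lemma~\ref{lem:order enrichment} with $\W=\mathbf 1$; antisymmetry of this ordinary order comes from antisymmetry of $R$. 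One then checks $\POS(S, \text{(that poset)})\cong\qPOS(`S,(\X,R))$ naturally, using that a monotone map $`S\to(\X,R)$ is a function $`S\to\X$ — equivalently an $\At(S)$-indexed family of points of $\X$ — satisfying the monotonicity inequality, which unpacks exactly to a monotone map of ordinary posets into the points poset. This last verification is routine once Lemma~\ref{lem:order enrichment} is in hand. I expect the bulk of the genuine effort, and any subtlety, to live entirely in part (a) of the monoidal-closed step: proving antisymmetry of the pointwise order on $\Y^\X$.
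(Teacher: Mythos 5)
Your treatment of completeness, coproducts, strong monoidality of $`(-)$, and the right adjoint matches the paper's: the paper also builds limits via $R=\bigwedge_\alpha J_\alpha^\dagger\circ R_\alpha\circ J_\alpha$ on the $\qSet$-limit, and its right adjoint is exactly $\X\mapsto\qPOS(\mathbf 1,\X)$ with the order of Lemma~\ref{lem:order enrichment}. The genuine gap is in the monoidal closure, and it is not where you locate it. You take the underlying quantum set of the internal hom to be the \emph{full} quantum function set $\Y^\X$ and then seek the largest order on it making $\Eval$ monotone. No such order exists in general: any candidate $Q'$ must be reflexive, so $\Eval\circ(Q'\times R)\leq S\circ\Eval$ forces $\Eval\circ(I_{\Y^\X}\times R)\leq S\circ\Eval$, which asserts that \emph{every} element of $\Y^\X$ is monotone; this already fails for ordinary posets with nontrivial orders, where the function set contains non-monotone maps. (The classical internal hom in $\POS$ is the set of \emph{monotone} maps with the pointwise order, not all of $Q^P$.) The paper therefore first cuts down to a subobject: Lemma~\ref{lem:monoidal closure lemma} defines $\W$ to be the largest subset of $\Y^\X$ such that $\Eval\circ(J_\W\times I_\X)$ is a homomorphism out of $(\W\times\X,\,I_\W\times R)$ --- the quantum analogue of ``the monotone maps'' --- and only then takes $Q$ to be the largest relation on $\W$ making evaluation monotone. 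Even the existence of this largest subset needs an argument (closure of the family $\fF$ under subsets and arbitrary disjoint unions), and the universal property needs the additional step that the curried form $G_0\colon\Z\to\Y^\X$ of a monotone $F\colon\Z\times\X\to\Y$ has its range inside $\W$, which the paper extracts from the epi--mono factorization of Proposition~\ref{prop:epic-mono factorization}.

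Your instinct that antisymmetry of $Q$ is the delicate clause is correct, but the mechanism you propose (antisymmetry of $S$ plus $\Eval$ being ``jointly surjective enough'') is not how the argument goes, and I do not see how to make it close. The paper's proof supposes $E=Q\wedge Q^\dagger\neq I_\W$, uses the operator-algebraic structure of the entries of the equivalence relation $E$ to manufacture a genuine \emph{function} $K\neq I_\W$ with $K\leq E$ (a unitary $1-2p$ for a nontrivial projection $p$ in $E(W_1,W_1)$, or a partial isometry between two distinct atoms related by $E$), deduces $\Eval_\sqsubseteq\circ(K\times I_\X)=\Eval_\sqsubseteq$ via Lemma~\ref{lem:inequality between functions is equality}, and then invokes the universal property of $\Eval$ in $\qSet$ to conclude $K=I_\W$, a contradiction. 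Producing that nontrivial function sitting below an equivalence relation is the key idea your sketch is missing; without it, or a substitute for it, step (a) of your plan does not go through.
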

Since $\qPOS$ is monoidal closed, it is enriched over itself, so $\times$ is a $\qPOS$-bifunctor. By Theorem \ref{thm:qPOS properties}, $`(-):\POS\to\qPOS$ is the left adjoint of an LNL model, and it follows that $\times$ is also a $\POS$-functor. One also can show that the monoidal product $\times$ 
reflects the order on homsets, hence:
\begin{proposition}\label{prop:tensor product reflects order}
Let $(\X_1,R_1)$, $(\X_2,R_2)$, $(\Y_1,S_1)$ and $(\Y_2,S_2)$ be quantum posets, and let $F_1,G_1:\X_1\to\Y_1$ and $F_2,G_2:\X_2\to\Y_2$ be functions. Then both $F_1\sqsubseteq G_1$ and $F_2\sqsubseteq G_2$ if and only if $F_1\times F_2\sqsubseteq G_1\times G_2$.
\end{proposition}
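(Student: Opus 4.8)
The plan is to pass to the entrywise description of the monoidal product in $\qRel$ and reduce the ``reflects'' direction to an elementary fact about tensor products of subspaces. Fix atoms $X_i \in \At(\X_i)$ and $Y_i \in \At(\Y_i)$ and abbreviate $H_i = L(X_i,Y_i)$, a finite-dimensional Hilbert space. Recall that $\X_1 \times \X_2$ has atoms $X_1 \tensor X_2$, that the monoidal product of relations acts on entries by $(R \times S)(X_1 \tensor X_2,\, Y_1 \tensor Y_2) = R(X_1,Y_1) \tensor S(X_2,Y_2)$ inside $L(X_1 \tensor X_2, Y_1 \tensor Y_2) \iso H_1 \tensor H_2$, and that by Lemma~\ref{lem:order enrichment} the relation $F_1 \times F_2 \sqsubseteq G_1 \times G_2$ uses the order $S_1 \times S_2$ on the codomain, hence unfolds to $G_1 \times G_2 \leq (S_1 \times S_2) \circ (F_1 \times F_2)$. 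Two structural facts about $\times$ on $\qRel$ will be used: it is a bifunctor, which gives the interchange law $(S_1 \times S_2) \circ (F_1 \times F_2) = (S_1 \circ F_1) \times (S_2 \circ F_2)$, and it is monotone on hom-posets, which is immediate since the tensor product of subspaces is monotone in each argument. Granting these, the forward implication is immediate: if $G_i \leq S_i \circ F_i$ for $i = 1,2$, then by monotonicity $G_1 \times G_2 \leq (S_1 \circ F_1) \times (S_2 \circ F_2)$, and by interchange the right side equals $(S_1 \times S_2) \circ (F_1 \times F_2)$, which is exactly $F_1 \times F_2 \sqsubseteq G_1 \times G_2$.

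For the converse I would first isolate the following linear-algebra lemma: if $A, C \subseteq H_1$ and $B, D \subseteq H_2$ are subspaces with $A \tensor B \subseteq C \tensor D$ and $B \neq 0$, then $A \subseteq C$. To see this, pick a nonzero $b \in B$ and a linear functional $\phi$ on $H_2$ with $\phi(b) \neq 0$, and apply $\mathrm{id}_{H_1} \tensor \phi$ to the inclusion, using that $(\mathrm{id}_{H_1} \tensor \phi)(C \tensor D) \subseteq C$ and $(\mathrm{id}_{H_1} \tensor \phi)(a \tensor b) = \phi(b)\, a$. Now suppose $F_1 \times F_2 \sqsubseteq G_1 \times G_2$, i.e.\ (using interchange) $G_1 \times G_2 \leq (S_1 \circ F_1) \times (S_2 \circ F_2)$, which entrywise reads $G_1(X_1,Y_1) \tensor G_2(X_2,Y_2) \subseteq (S_1 \circ F_1)(X_1,Y_1) \tensor (S_2 \circ F_2)(X_2,Y_2)$ for every choice of atoms. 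To prove $F_1 \sqsubseteq G_1$, fix $X_1, Y_1$; since $G_2$ is a function, $G_2^\dagger \circ G_2 \geq I_{\X_2}$ has nonzero diagonal entries, so some $G_2(X_2,Y_2) \neq 0$. Applying the lemma at this quadruple of atoms gives $G_1(X_1,Y_1) \subseteq (S_1 \circ F_1)(X_1,Y_1)$, and since $X_1, Y_1$ were arbitrary, $G_1 \leq S_1 \circ F_1$, that is, $F_1 \sqsubseteq G_1$; the argument for $F_2 \sqsubseteq G_2$ is symmetric. (The step ``some $G_2(X_2,Y_2) \neq 0$'' uses $\X_2 \neq \mathbf 0$, and the mirror step uses $\X_1 \neq \mathbf 0$, so the equivalence should be read for nonempty quantum posets; the degenerate cases are immediate.)

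The substantive work lies entirely in the converse, and within it the one point requiring care is producing a nonzero entry of $G_2$ (respectively $G_1$) so that the tensor lemma applies — this is exactly where it matters that the $F_i$ and $G_i$ are genuine functions, i.e.\ total relations, rather than arbitrary relations, and it is also where the empty-domain cases enter. Everything else reduces to the entrywise formulas for composition and monoidal product in $\qRel$ and is routine bookkeeping.
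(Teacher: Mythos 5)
The paper omits its proof of this proposition (it is not among the results proved in the appendix), so there is no official argument to compare against; judged on its own, your proof is correct. The forward direction is indeed just bifunctoriality of $\times$ on $\qRel$ together with monotonicity of the tensor product of subspaces, and for the converse your slice lemma (apply $\mathrm{id}_{H_1}\tensor\phi$ with $\phi(b)\neq 0$ to the entrywise inclusion $A\tensor B\subseteq C\tensor D$) is exactly the right tool; the essential point, which you correctly isolate, is that totality of $G_2$ forces some entry $G_2(X_2,Y_2)$ to be nonzero, so that the lemma's hypothesis $B\neq 0$ can be met. One small quibble: your parenthetical remark that ``the degenerate cases are immediate'' understates the situation. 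If $\X_2=\mathbf 0$ then $F_1\times F_2=G_1\times G_2$ is the empty function and the left-to-right implication of the biconditional genuinely fails whenever $F_1\not\sqsubseteq G_1$; so the nonemptiness of $\X_1$ and $\X_2$ is a necessary hypothesis that the proposition tacitly assumes, not a degenerate case that can be checked separately. This does not affect the intended use of the proposition (Proposition \ref{prop:tensor product in qCPObs reflects order} applies it with domain $\mathbf 1_\perp$), but it is worth stating the restriction explicitly rather than calling the excluded case immediate.
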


\section{Quantum CPOs}
An $\omega$-CPO is an ordinary poset $P$ satisfying the property that any increasing chain $x_1\leq x_2\leq\cdots \in P$ has a \emph{least upper bound} $x_\infty\in P$ satisfying $x_\infty = \sup_n x_n$. It is well known that this is equivalent to the condition that, for any ordinary set $W$ and any increasing sequence of functions $f_1\leq f_2\leq\cdots \colon W\to P$, there is a function $f_{\infty} \colon W\to P$ satisfying $\sup_n f_n(w) = f_\infty(w)$, $\forall w\in W$. This motivates the following definitions:

\begin{definition}\label{nearrow}
Let $(\X,R)$ be a quantum poset, and let $\W$ be any quantum set. Let $K_1\sqsubseteq K_2\sqsubseteq\cdots:\W\to\X$ be a monotonically increasing sequence for the order on $\qSet(\W,\X)$ defined above.
If there exists a function $K_\infty:\W\to\X$ such that
\[ R\circ K_\infty =\bigwedge_{n\in\NN}R\circ K_n,\]
then we say that $K_\infty$ is the \emph{limit} of the sequence, and we write $K_n\nearrow K_\infty$.
\end{definition}
 
Thus, it would be natural to define a quantum cpo to be a quantum poset $\X$ that has a limit $K_\infty$ for any increasing sequence $K_1 \sqsubseteq K_2 \sqsubseteq \cdots\colon \W\to \X$, from any quantum set $\W$. To tame this characterization, we recall $\H_n$ denotes the quantum set whose single atom is $H_n$, the $n$-dimensional Hilbert space. Together the quantum sets $\H_1,\H_2, \H_3, \ldots$ form a generating family \cite[Definition 4.5.1]{borceux:handbook1} for $\qSet$, leading to:

\begin{definition}\label{def:quantcpo}
\begin{enumerate}
\item[(a)] A quantum poset $(\X,R)$ is a \emph{quantum cpo} iff for each $d\in\NN$, and each increasing sequence  $K_1\sqsubseteq K_2\sqsubseteq\cdots \colon \H_d \To \X$, there is a function $K_\infty\: \H_d \To \X$ such that $K_n \nearrow K_\infty$.
\item[(b)] A function $F\: \X \To \Y$  between quantum cpos  $(\X,R)$ and $(\Y,S)$ is \emph{Scott continuous} iff for each $d \in \NN$, and all functions $K_i\colon \H_d \To \X$, with $i \in \NN \union \{\infty\}$, if $K_n \nearrow K_\infty$ then $F \circ K_n \nearrow F \circ K_\infty$.
\end{enumerate}
We write $\qCPO$ for the category of quantum cpos and Scott continuous functions.
\end{definition}

As implied by the terminology, every ordinary cpo is also a quantum cpo, and conversely.

\begin{theorem}\label{classical quantum cpos}
Let $(S,\sqsubseteq)$ be an ordinary poset. Then $(`S,`\sqsubseteq)$ is a quantum cpo if and only if $(S,\sqsubseteq)$ is a cpo. Moreover, the functor $`(-):\CPO\to\qCPO$ is fully faithful.
\end{theorem}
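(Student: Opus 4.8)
The plan is to establish the two directions of the biconditional and the fully faithful claim more or less independently, with the bulk of the work going into unwinding Definition~\ref{def:quantcpo}(a) in the classical case. The key observation is that for a classical quantum poset $(`S, `\!\!\sqsubseteq)$, an increasing sequence $K_1 \sqsubseteq K_2 \sqsubseteq \cdots \colon \H_d \To `S$ is governed by the structure of functions from an \emph{atomic} quantum set into a \emph{classical} one. I would first recall (or prove) the structural fact, analogous to the description of measurements in Example~\ref{measurement}, that a function $F \colon \H_d \To `S$ corresponds to a projective POVM $(p_s \suchthat s \in S)$ on $H_d$, i.e.\ $F(H_d, \CC_s) = L(H_d, \CC_s) \cdot p_s$ with the $p_s$ mutually orthogonal projections summing to the identity. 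Under this correspondence I would translate the order $F \sqsubseteq G$ from Lemma~\ref{lem:order enrichment} (namely $G \leq (`\!\!\sqsubseteq) \circ F$) into an explicit relation between the POVMs: $G \leq (`\!\!\sqsubseteq)\circ F$ should say that each projection $q_t$ of $G$ decomposes as a sum of the $p_s$ over $s \sqsubseteq t$, or equivalently that the spectral projections coarsen along $\sqsubseteq$. This is the quantum analog of saying that the partition induced by $G$ refines that of $F$ while respecting $\sqsubseteq$.

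For the forward direction, suppose $(`S, `\!\!\sqsubseteq)$ is a quantum cpo; I want to show $(S, \sqsubseteq)$ is a cpo. Given an increasing chain $s_1 \sqsubseteq s_2 \sqsubseteq \cdots$ in $S$, I would encode it as an increasing sequence $K_n \colon \H_1 \To `S$ (taking $d=1$, where functions $\H_1 \to `S$ are just elements of $S$ and the order is exactly $\sqsubseteq$), apply the hypothesis to get $K_\infty$, and check via Definition~\ref{nearrow} — i.e.\ $(`\!\!\sqsubseteq) \circ K_\infty = \bigwedge_n (`\!\!\sqsubseteq)\circ K_n$ — that the element $s_\infty$ corresponding to $K_\infty$ is the least upper bound of the $s_n$. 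The meet on the right is computed in the orthomodular lattice $\qRel(\H_1, `S)$, which for the classical target $`S$ reduces to an ordinary intersection of subspaces indexed by $S$, so this computation should be routine. For the reverse direction, suppose $(S,\sqsubseteq)$ is a cpo and take an increasing sequence $K_1 \sqsubseteq K_2 \sqsubseteq \cdots \colon \H_d \To `S$. Using the POVM description, I would argue that $L(H_d)$ is finite-dimensional, so the decreasing sequence of von Neumann subalgebras (or the refinement of spectral partitions) generated by the $K_n$ must \emph{stabilize} at some finite stage $N$; intuitively, each step strictly coarsens a partition of a finite-dimensional space, and this can only happen finitely often. Once the underlying partition stabilizes, the sequence is eventually a chain within a fixed finite Boolean algebra of projections labelled by $S$, and the limit $K_\infty$ can be built componentwise using the suprema that exist in the cpo $(S,\sqsubseteq)$ on each block; one then verifies $K_n \nearrow K_\infty$ directly from Definition~\ref{nearrow}. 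The fully faithful claim follows from the fact, already noted in the excerpt for $\qPOS$, that $`(-)$ is fully faithful on posets together with the fact that Scott continuity for $`f$ is equivalent to ordinary Scott continuity for $f$ (again using $d=1$ in Definition~\ref{def:quantcpo}(b) for one direction, and the stabilization argument for the other).

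I expect the main obstacle to be the reverse direction — specifically, making the "refinement of spectral partitions must stabilize" argument rigorous and extracting from it a clean formula for $K_\infty$. One has to be careful that the meet $\bigwedge_n R \circ K_n$ in $\qRel(\H_d, `S)$ genuinely equals $R \circ K_N$ once the partitions stabilize, and that the candidate $K_\infty$ so produced really is a \emph{function} (satisfies $K_\infty \circ K_\infty^\dagger \leq I$ and $K_\infty^\dagger \circ K_\infty \geq I$) and not merely a relation — this is where the projectivity and summing-to-identity properties of the POVMs must be invoked. A secondary subtlety is handling the general atom $\H_d$ rather than just $\H_1$: the order on $\qSet(\H_d, `S)$ is genuinely richer than a chain condition on $S$, so one cannot simply reduce to the $d=1$ case, and the stabilization argument is doing real work there. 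If a fully explicit POVM computation proves unwieldy, an alternative is to pass through the dual picture, working with the $*$-homomorphisms $\ell^\infty(`S) = \prod_{s \in S} \CC \To L(H_d)$ and arguing about decreasing sequences of images, which may make the finite-dimensional stabilization cleaner to state.
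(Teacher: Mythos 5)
Your reduction of the statement to the combinatorics of $S$-indexed projective resolutions of the identity is the right starting point: functions $\H_d \To `S$ are exactly such resolutions $(p_s)_{s\in S}$, the order $F \sqsubseteq G$ translates to $q_t \leq \sum_{s \sqsubseteq t} p_s$ for every $t$ (note: each $q_t$ is \emph{dominated by} that sum, not in general itself a sum of a subfamily of the $p_s$), the forward direction via $d=1$ is correct, and the bookkeeping for full faithfulness is fine once the biconditional is in place.

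The gap is in the reverse direction, and it is not a matter of ``making the stabilization rigorous'': the stabilization claim is false. A step can coarsen the partition and a later step can re-refine it along entirely new projections, as long as the labels keep increasing, so neither the partition nor the commutative subalgebra of $L(H_d)$ it generates need be eventually constant (or even eventually decreasing). Concretely, let $S$ be the cpo consisting of a ladder $a_0, b_0 \sqsubseteq c_0 \sqsubseteq a_1, b_1 \sqsubseteq c_1 \sqsubseteq \cdots$ (with $a_m, b_m$ incomparable) together with a single top element, and define $K_n \colon \H_2 \To `S$ by letting $K_{2m}$ be the rank-one resolution determined by a unit vector $v_m$ (labelled $a_m, b_m$), with the $v_m$ pairwise non-collinear and non-orthogonal, and letting $K_{2m+1}$ be the trivial resolution $\{1\}$ labelled $c_m$. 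One checks $K_1 \sqsubseteq K_2 \sqsubseteq \cdots$, yet the associated partitions oscillate forever, and the limit (the constant function at the top element) is not obtained by taking suprema of labels blockwise on any stabilized partition. What \emph{does} stabilize, by finite-dimensionality, is the decreasing sequence of projections $P_t^{(n)} = \sum_{s \sqsubseteq t} p_s^{(n)}$ for each \emph{fixed} $t$; this is precisely the argument in the paper's proof of Proposition~\ref{finite quantum cpos}, which then takes a maximum of the stabilization indices over the finitely many atoms --- a step with no analogue when $S$ is infinite. The actual work of the reverse direction is to assemble the pointwise limits $P_t^{(\infty)} = \bigwedge_n P_t^{(n)}$ (a commuting family, monotone in $t$) into a single resolution $(p_t^{(\infty)})$ satisfying $\sum_{s \sqsubseteq t} p_s^{(\infty)} = P_t^{(\infty)}$, and to prove totality $\bigvee_t P_t^{(\infty)} = 1$; both steps genuinely use the existence of suprema of chains in $S$, e.g.\ by assigning to each minimal projection of the finite commutative algebra generated by the $P_t^{(\infty)}$ a least label under which it sits. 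Your proposal contains no mechanism for this, and since your argument that $`f$ is Scott continuous when $f$ is also defers to the same stabilization, the gap additionally affects the well-definedness of $`(-)\colon \CPO \to \qCPO$ on morphisms.
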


As in the classical case, all finite quantum posets are quantum cpos.

\begin{proposition}\label{finite quantum cpos}
Let $(\X,R)$ be a quantum poset, where $\X$ has a finite number of atoms. Then $(\X,R)$ is a quantum cpo.
\end{proposition}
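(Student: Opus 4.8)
The plan is to show that in a quantum poset $(\X,R)$ whose quantum set $\X$ has finitely many atoms, every increasing sequence $K_1 \sqsubseteq K_2 \sqsubseteq \cdots \colon \H_d \To \X$ stabilizes, so that $K_\infty := K_N$ for $N$ large serves as its limit in the sense of Definition \ref{nearrow}. Recall from Definition \ref{nearrow} that $K_n \nearrow K_\infty$ means $R \circ K_\infty = \bigwedge_{n} R \circ K_n$; if the sequence is eventually constant at $K_N$, then the descending family $(R\circ K_n)_n$ is eventually constant at $R\circ K_N$ (monotonicity of the sequence gives, via Lemma \ref{lem:order enrichment}, that $R\circ K_{n+1} \leq R \circ K_n$), so its infimum is $R\circ K_N$ and the limit condition holds with $K_\infty = K_N$. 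Thus the whole problem reduces to a finiteness/stabilization argument.

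First I would reformulate the descending chain $R\circ K_1 \geq R\circ K_2 \geq \cdots$ as a descending chain in an orthomodular lattice. Each $R\circ K_n$ is a binary relation in $\qRel(\H_d,\X)$, and by the discussion in Section \ref{section qsr}, the binary relations from $\H_d$ to $\X$ form an orthomodular lattice; moreover, since $\X$ has finitely many atoms $X_1,\dots,X_k$, such a relation is just a tuple of subspaces $R\circ K_n$ evaluated at $(H_d, X_j)$, i.e.\ an element of $\prod_{j=1}^k \mathrm{Sub}\big(L(H_d,X_j)\big)$, a product of subspace lattices of finite-dimensional Hilbert spaces. A descending chain of subspaces of a finite-dimensional space must stabilize (the dimension is a nonnegative integer that weakly decreases), and a finite product of such lattices also has the descending chain condition. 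Hence the sequence $(R\circ K_n)_n$ stabilizes: there is $N$ with $R\circ K_n = R\circ K_N$ for all $n \geq N$.

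The remaining step is to deduce that the sequence $(K_n)_n$ itself stabilizes from the fact that $(R\circ K_n)_n$ does. Here I would use antisymmetry of $R$ together with the characterization of the order on $\qSet(\H_d,\X)$. For $n \geq N$ we have $R\circ K_n = R\circ K_{n+1}$; since $K_n \sqsubseteq K_{n+1}$ means $K_{n+1} \leq R\circ K_n = R\circ K_{n+1}$, and trivially $K_{n+1} \leq R\circ K_{n+1}$ holds anyway, I instead argue symmetrically: from $R\circ K_n = R\circ K_{n+1}$ one gets $K_{n+1} \leq R\circ K_n$ and $K_n \leq R \circ K_n = R\circ K_{n+1}$, i.e.\ $K_n \sqsubseteq K_{n+1}$ and $K_{n+1}\sqsubseteq K_n$, so by antisymmetry of the partial order on $\qSet(\H_d,\X)$ established in Lemma \ref{lem:order enrichment}, $K_n = K_{n+1}$. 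Therefore $K_\infty := K_N$ satisfies $K_n \nearrow K_\infty$, and $(\X,R)$ is a quantum cpo by Definition \ref{def:quantcpo}(a).

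The main obstacle I anticipate is purely bookkeeping: making precise that a binary relation in $\qRel(\H_d,\X)$ with $\X$ finitely-atomic really does live in a lattice satisfying the descending chain condition, and that the infimum $\bigwedge_n$ appearing in Definition \ref{nearrow} is computed entrywise as the intersection of subspaces, so that stabilization of the chain genuinely yields the required identity $R\circ K_\infty = \bigwedge_n R\circ K_n$. None of this is deep, but one must be careful that $d$ is also fixed, so both $L(H_d, X_j)$ and the index set $\{1,\dots,k\}$ are finite, which is exactly what powers the argument; the hypothesis that $\X$ has finitely many atoms is used essentially here, just as finiteness of the poset is used in the classical statement.
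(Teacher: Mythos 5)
Your proof is correct and takes essentially the same route as the paper's: both arguments observe that the descending chain $R\circ K_1\geq R\circ K_2\geq\cdots$ must stabilize because each entry $(R\circ K_n)(H_d,X)$ lives in the subspace lattice of the finite-dimensional space $L(H_d,X)$ and there are only finitely many atoms $X$, and then take $K_\infty=K_N$ for $N$ past the stabilization point. Your additional step deducing that the functions $K_n$ themselves stabilize (via reflexivity and the antisymmetry of $\sqsubseteq$ from Lemma \ref{lem:order enrichment}) is correct but not needed, since once $R\circ K_n=R\circ K_N$ for all $n\geq N$ the identity $R\circ K_N=\bigwedge_n R\circ K_n$ already gives $K_n\nearrow K_N$ directly, which is exactly how the paper concludes.
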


As a consequence, the quantum poset in Example \ref{ex:non-classical quantum poset} is a quantum cpo.
We can obtain a more complex example of a quantum cpo as a quantum analog of the powerset under the inclusion order:

\begin{example}\label{ex:monoidal closure}
Let $n$ be a positive integer, and equip the atomic quantum set $\H_n$ with the  trivial order $I_{\H_n}$, and the ordinary set $\Omega = \{0,1\}$ the standard order $ 0 < 1$, which we denote $\sqsubseteq$. Then the quantum poset $[\H_n, `\Omega]_{\sqsubseteq}$ is a quantum cpo.
\end{example}

Just as $\textbf{POS}$ and $\textbf{qPOS}$ form an LNL model, so do $\textbf{CPO}$ and $\textbf{qCPO}$:
\begin{theorem}\label{qCPO theorem}
The category $\qCPO$ is monoidal closed, complete, has all coproducts, and is enriched over $\CPO$ as a monoidal closed category, and the functor $`(-):\CPO\to\qCPO$ is strong monoidal and has a right adjoint. 
\end{theorem}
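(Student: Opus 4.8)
The plan is to establish each of the listed properties of $\qCPO$ by leveraging the corresponding result for $\qPOS$ (Theorem \ref{thm:qPOS properties}) together with a reflectivity argument, exactly mirroring the classical story in which $\CPO$ sits inside $\POS$. First I would show that $\qCPO$ is a full reflective subcategory of $\qPOS$: given a quantum poset $(\X,R)$, one constructs its "ideal completion" or, more directly in this setting, the smallest quantum cpo containing it, and checks that the inclusion $\qCPO \hookrightarrow \qPOS$ has a left adjoint. Since reflective subcategories inherit all limits, completeness of $\qCPO$ follows from completeness of $\qPOS$; one must separately verify that the limiting quantum poset of a diagram of quantum cpos is again a quantum cpo, which should follow from the formula $R=\bigwedge_\alpha J_\alpha^\dag\circ R_\alpha\circ J_\alpha$ for limits in $\qPOS$ together with the fact that a compatible family of limits $K_n\nearrow K_\infty$ in each $\X_\alpha$ assembles into a limit in $\X$ (the limit condition $R\circ K_\infty=\bigwedge_n R\circ K_n$ is tested atom-by-atom against the generating family $\H_1,\H_2,\ldots$). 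Coproducts are handled by noting that the coproduct of quantum cpos formed in $\qPOS$ via $(\X\uplus\Y, R\uplus S)$ is again a quantum cpo, since an increasing sequence $\H_d\to\X\uplus\Y$ factors (because $\H_d$ is atomic and connected) through one of the summands.

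Next I would address the monoidal closed structure. The internal hom $[\X,\Y]$ in $\qPOS$ restricts to quantum cpos: when $\Y$ is a quantum cpo, $[\X,\Y]$ should be shown to be one, using that limits $K_n\nearrow K_\infty$ in a function space are computed "pointwise" against the generators, which is the quantum analog of the classical fact that function cpos are computed pointwise. For the tensor product, the natural approach is to define $\X\tensor\Y$ in $\qCPO$ as the reflection (into $\qCPO$) of the $\qPOS$-tensor product $\X\times\Y$, and then verify the adjunction $\qCPO(\X\tensor\Y,\Zz)\iso\qCPO(\X,[\Y,\Zz])$ by composing the $\qPOS$-level adjunction with the reflection. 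Proposition \ref{prop:tensor product reflects order} should be invoked to see that Scott continuity in each variable separately corresponds to Scott continuity of the induced map out of the tensor. The $\CPO$-enrichment then comes from Lemma \ref{lem:order enrichment} and the observation that each homset $\qCPO(\X,\Y)$, being a sub-poset of $\qPOS(\X,\Y)$ closed under the relevant sups of increasing sequences, is itself a cpo; that composition is Scott continuous in the enriched sense is a routine check, as is monoidal-closedness of the enrichment once the tensor and hom have been pinned down.

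Finally, for the functor $`(-)\colon\CPO\to\qCPO$: Theorem \ref{classical quantum cpos} already gives that it is well-defined and fully faithful. Strong monoidality is inherited from the strong monoidality of $`(-)\colon\POS\to\qPOS$ in Theorem \ref{thm:qPOS properties}, since the tensor unit $\mathbf 1=`\{\ast\}$ is classical and $`(S\times T)\iso {`S}\times{`T}$ already at the level of quantum sets and hence of quantum posets; one only needs that these isomorphisms live in $\qCPO$, which is immediate because they are isomorphisms of the underlying quantum posets. For the right adjoint, I would either restrict the right adjoint $G\colon\qPOS\to\POS$ of Theorem \ref{thm:qPOS properties} and check it sends quantum cpos to cpos (its value on $\X$ is essentially the classical poset of "points" $\mathbf 1\to\X$, whose completeness follows from $\X$ being a quantum cpo applied with $\W=\mathbf 1=\H_1$), or construct the right adjoint directly as $\X\mapsto\qCPO(`(-),\X)$-representing object via the adjoint functor theorem, which is available since $\qCPO$ is complete and, one checks, locally presentable or at least well-powered with a generator.

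I expect the main obstacle to be the construction of the tensor product in $\qCPO$ and the verification that it is the restriction of (equivalently, computed compatibly with) the $\qPOS$ tensor product rather than a genuinely new reflection — in the classical case $\CPO$ is in fact closed under the $\POS$ tensor, and one would like the analogous closure here so that the square \eqref{diag:lift quote} and the LNL model \eqref{qCPO model} behave well; pinning down that $\X\times\Y$ (the $\qPOS$-tensor) of two quantum cpos is already a quantum cpo requires understanding increasing sequences $\H_d\to\X\times\Y$, which need not factor through a product of sequences, so some care with the generating family and with Definition \ref{nearrow} is needed there. The reflectivity of $\qCPO$ in $\qPOS$ — i.e.\ the existence of a free quantum cpo on a quantum poset — is the other place where real work is required, since the classical ideal completion must be replaced by a construction respecting the quantum relation structure.
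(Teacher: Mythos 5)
Your high-level decomposition --- deriving each property of $\qCPO$ from the corresponding property of $\qPOS$ in Theorem \ref{thm:qPOS properties} --- is the right one, and your treatment of the internal hom, the $\CPO$-enrichment of homsets, and the adjunction $`(-)\dashv\qCPO(\mathbf 1,-)$ is consistent with how the paper proceeds (the paper itself only sketches this theorem, supplying the key building blocks in its appendix). However, there are two genuine gaps. The first is an outright error: your coproduct argument claims that an increasing sequence $\H_d\to\X\uplus\Y$ factors through one summand ``because $\H_d$ is atomic and connected.'' This is false in $\qSet$: the measurement function $F_2\colon \H_2\to `\{1,-1\} = `\{1\}\uplus`\{-1\}$ of Example \ref{measurement} is a function out of an atomic quantum set with nonzero components at both atoms of the codomain, so already a constant sequence need not land in a single summand. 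The classical connectedness intuition does not quantize, and closure of quantum cpos under $\uplus$ requires a different argument (one can exploit that $R\uplus S$ has no cross-terms, so the relations $(R\uplus S)\circ K_n$ can be analyzed atom by atom even though the functions $K_n$ themselves do not split).

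The second gap is that the step you yourself identify as the main obstacle --- showing the $\qPOS$-monoidal product $\X\times\Y$ of two quantum cpos is again a quantum cpo, even though an increasing sequence $E_n\colon\H_d\to\X\times\Y$ does not factor as a product of sequences --- is precisely where the real work lies, and your proposal names it without supplying it. The paper's Theorem \ref{prop:monoidal product of qCPOs} resolves it concretely: project to $F_n=P\circ E_n$ and $G_n=Q\circ E_n$, take their limits $F_\infty$ and $G_\infty$ in $\X$ and $\Y$, use finite-dimensionality of the spaces $L(H_d,X)$ to find a stabilization index $l$ with $J_\infty^\dag\circ R\circ F_i=J_\infty^\dag\circ R\circ F_\infty$ for $i\geq l$, and reassemble the limit as $E_\infty=(F_\infty\times G_\infty)\circ(F_l^\dag\times G_l^\dag)\circ(F_l,G_l)$, with Lemma \ref{lem:monoidal product qCPO} providing the comparison functions needed to verify $E_n\nearrow E_\infty$. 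Because the $\qPOS$-tensor of quantum cpos is thereby already a quantum cpo, no reflection of $\qPOS$ into $\qCPO$ is needed anywhere; the quantum ``ideal completion'' on which you lean for both completeness and the tensor is an additional unconstructed ingredient rather than a shortcut, and completeness is instead verified directly on the limit order $\bigwedge_\alpha J_\alpha^\dag\circ R_\alpha\circ J_\alpha$, as you suggest in your own parenthetical.
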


Classical cpos and finite quantum posets are the basic examples of quantum cpos. Theorem \ref{qCPO theorem} allows us to combine these basic examples to form new quantum cpos.

\subsection{Pointed quantum cpos}
Any model of a programming language supporting recursion needs denotations for nonterminating terms. In ordinary domain theory, this is done by passing to the category $\CPO_{\perp!}$ of pointed cpos with strict Scott continuous maps. In this section we formulate the quantum generalization of this category. 

We say that a quantum cpo $(\X,R)$ is \emph{pointed} if there exists a unique one-dimensional atom $X^\bot\in \At(\X)$ such that $R(X^\bot,X)=L(X^\bot,X)$ for each $X\in \At(\X)$. It follows that any ordinary cpo $(P,\sqsubseteq)$ is pointed if and only if $(`P,`\sqsubseteq)$ is pointed. Given pointed quantum cpos
$(\X,R)$ and $(\Y,S)$, a Scott continuous map $F:\X\to\Y$ is said to be \emph{strict} iff $F(X^\bot,Y^\bot)=L(X^\bot,Y^\bot)$. 
We denote the category of pointed quantum cpos with strict Scott continuous maps by $\qCPO_{\perp!}$. 

There also is a lift functor $(-)_\perp:\qCPO\to\qCPO_{\perp!}$, generalizing the ordinary lift functor in the sense that Diagram \ref{diag:lift quote} of Section \ref{sec:intro} commutes. Just as the ordinary lift functor is left adjoint to the inclusion of $\CPO_{\perp!}$ into $\CPO$, we have
\begin{lemma}
The lift functor $(-)_{\perp}:\qCPO\to\qCPO_{\perp!}$ is left adjoint to the inclusion functor $\qCPO_{\perp!}\to\qCPO$.
\end{lemma}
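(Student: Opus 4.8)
The plan is to exhibit the evident inclusion $\eta_\X\colon\X\to\X_\perp$ as a universal arrow from the object $\X$ to the inclusion functor $U\colon\qCPO_{\perp!}\to\qCPO$; the claimed adjunction then follows from the standard characterization of left adjoints by universal arrows, with $(-)_\perp$ as left adjoint and $\eta$ as unit. Recall that $\X_\perp$ is obtained from $(\X,R)$ by adjoining a fresh one-dimensional atom $X^\bot$ and extending $R$ to the relation $R_\perp$ determined by $R_\perp(X^\bot,Z)=L(X^\bot,Z)$ for every atom $Z$ of $\X_\perp$ and $R_\perp(X,X^\bot)=0$ for $X\in\At(\X)$, so that $X^\bot$ becomes a new least element; that $(\X_\perp,R_\perp)$ is again a quantum cpo, and that it is pointed, is part of the construction of the lift functor. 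Define $\eta_\X$ by $\eta_\X(X,X)=\CC\cdot 1_X$ for $X\in\At(\X)$ and $\eta_\X(X,X^\bot)=0$. I would first check that $\eta_\X$ is a function, is monotone (indeed $R_\perp\circ\eta_\X=\eta_\X\circ R$), and is Scott continuous; the last point is immediate because meets in $\qRel$ are computed entrywise, so that $\eta_\X\circ(-)$ commutes with the meets $\bigwedge_n R\circ K_n$ appearing in Definition~\ref{nearrow}, while $R_\perp\circ\eta_\X=\eta_\X\circ R$ handles the rest.

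Next, fix a pointed quantum cpo $(\Y,S)$ with least atom $Y^\bot$ and a Scott continuous $F\colon\X\to\Y$, and define $\hat F\colon\X_\perp\to\Y$ by $\hat F(X,Y)=F(X,Y)$ for $X\in\At(\X)$, together with $\hat F(X^\bot,Y^\bot)=L(X^\bot,Y^\bot)$ and $\hat F(X^\bot,Y)=0$ for $Y\neq Y^\bot$. Checking that $\hat F$ is a function, monotone, and strict is routine: on the atoms of $\X$ everything is inherited from $F$, and at the atom $X^\bot$ the function axioms amount to the observation that a function from a one-dimensional atomic quantum set into $\Y$ is exactly a choice of one-dimensional atom of $\Y$, here $Y^\bot$; monotonicity at $X^\bot$ holds because $Y^\bot$ lies below every atom of $\Y$, so that $(S\circ\hat F)(X^\bot,Y)=L(X^\bot,Y)$ is the whole space. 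The identity $\hat F\circ\eta_\X=F$ is immediate.

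The substantive point is Scott continuity of $\hat F$. Given $d\in\NN$ and $K_1\sqsubseteq K_2\sqsubseteq\cdots\colon\H_d\to\X_\perp$ with $K_n\nearrow K_\infty$, monotonicity of $\hat F$ already yields $\hat F\circ K_n\sqsubseteq\hat F\circ K_\infty$ for all $n$ and hence $S\circ\hat F\circ K_\infty\leq\bigwedge_n S\circ\hat F\circ K_n$; the content is the reverse inequality. The approach is to analyze an increasing sequence into $\X_\perp$ through the adjoined atom: the function and single-valuedness axioms force each $K_n$ to decompose into an ``undefined subspace'' of $H_d$ on which $K_n$ lands in $X^\bot$ together with a partial function into $\X$ on the orthogonal complement, and along $\sqsubseteq$ the undefined subspace can only shrink while the partial function into $\X$ only increases in the order of $R$. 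Postcomposing with $\hat F$ sends the undefined subspace to $Y^\bot$, which is absorbed because $Y^\bot$ is least, and acts as $F$ on the complement, where Scott continuity of $F$ applies; reassembling the two pieces, using the description of $K_\infty$ supplied by the lift construction, gives $\bigwedge_n S\circ\hat F\circ K_n\leq S\circ\hat F\circ K_\infty$. This bookkeeping with the entrywise meets of Definition~\ref{nearrow} and the function axioms is the one genuinely technical part of the argument, and I expect it to be the main obstacle.

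Finally, for uniqueness, suppose $G\colon\X_\perp\to\Y$ is strict and Scott continuous with $G\circ\eta_\X=F$. Then $G$ agrees with $F$, hence with $\hat F$, on all atoms of $\X$, and the restriction of $G$ to the one-atom quantum set $\{X^\bot\}\iso\mathbf 1$ is a function $\mathbf 1\to\Y$, i.e.\ a choice of one-dimensional atom of $\Y$, which strictness forces to be $Y^\bot$; thus $G=\hat F$. This shows $\eta_\X$ is universal. Naturality of the resulting bijection $\qCPO_{\perp!}(\X_\perp,\Y)\iso\qCPO(\X,U\Y)$ in $\X$ and $\Y$ is then automatic, so $(-)_\perp\dashv U$, which is the assertion.
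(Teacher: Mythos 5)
Your overall architecture is the right one, and since the paper omits the proof of this lemma entirely (it is not among the proofs supplied in the appendix), I can only judge it on its own terms. The construction of $\X_\perp$, the identification of the unit $\eta_\X$ with the inclusion, the formula for the extension $\hat F$, and the uniqueness argument via strictness together with the classification of functions $\mathbf 1\to\Y$ by one-dimensional atoms of $\Y$ (Lemma \ref{lem:functions with onedimensional domain and onedimensional atoms}) are all correct and are essentially forced. The verifications you call routine really are routine; for instance $(S\circ\hat F)(X^\bot,Y)=L(Y^\bot,Y)\cdot L(X^\bot,Y^\bot)=L(X^\bot,Y)$ because $L(X^\bot,Y^\bot)$ is spanned by an invertible operator, which gives monotonicity at the new atom.

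The genuine gap is exactly the step you flag and defer: Scott continuity of $\hat F$. This is not bookkeeping that takes care of itself, because the obstruction is lattice-theoretic: writing $k_n=K_n(H_d,X^\bot)$ one finds $(S\circ\hat F\circ K_n)(H_d,Y)=(S\circ F\circ K_n^0)(H_d,Y)\vee L(X^\bot,Y)\cdot k_n$, where $K_n^0$ is the part of $K_n$ landing in $\At(\X)$, and meets of subspaces do not distribute over joins, so $\bigwedge_n$ cannot simply be pushed past the $\vee$. To rescue this one needs (i) that $k_{n+1}\leq k_n$ (which does follow, since $(R_\perp\circ K)(H_d,X^\bot)=K(H_d,X^\bot)$) and hence that $k_n$ stabilizes by finite-dimensionality; (ii) the orthogonality $K_n^0(H_d,Z)\cdot k_n^\dagger=0$ coming from $K_n\circ K_n^\dagger\leq I$, which places the two joinands inside complementary corners $L(H_d,Y)(1-p)$ and $L(H_d,Y)p$ for the support projection $p$ of $k_\infty$, and only then does the meet distribute over the join; and (iii) a way to feed the ``defined parts'' $K_n^0$ into the Scott continuity of $F$, which is stated only for total functions $\H_e\to\X$. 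Point (iii) is where your sketch is most fragile: the inclusion of the subspace $(1-p)H_d$ into $H_d$ is \emph{not} a morphism of $\qSet$ (it fails $J\circ J^\dagger\leq I$), so one must instead compress the $K_n^0$ by the corresponding isometry to manufacture genuine functions $\H_e\to\X$ with $e=\dim(1-p)H_d$, check these form an increasing sequence with the correct limit, apply continuity of $F$ there, and transport the conclusion back, handling separately the degenerate case where every $K_n$ is the constant bottom map. None of this is impossible, but it is the entire mathematical content of the lemma, and as written your proposal asserts rather than proves it.
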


There is a symmetric monoidal product $\otimes$ on $\qCPO_{\perp!}$, reminiscent of the smash product on $\CPO_{\perp!}$. The monoidal unit is obtained by lifting the monoidal unit $\mathbf 1$ of $\qCPO$.

\begin{proposition}
The category $\qCPO_{\perp!}$ is monoidal closed, and the lifting functor $(-)_\perp:\qCPO\to\qCPO_{\perp!}$ is strong monoidal. 
\end{proposition}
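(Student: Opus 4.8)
The plan is to build the internal hom $[(\X,R),(\Y,S)]_{\perp!}$ in $\qCPO_{\perp!}$ directly from the internal hom $[(\X,R),(\Y,S)]$ of $\qCPO$ (which exists by Theorem \ref{qCPO theorem}), by passing to the strict maps, and to verify that the required tensor–hom adjunction restricts appropriately. First I would recall the construction of the smash product $\otimes$ on $\qCPO_{\perp!}$: given pointed quantum cpos $(\X,R)$ and $(\Y,S)$, form $(\X,R)\times(\Y,S)$ in $\qCPO$, then quotient (using the quantum analog of the poset quotient discussed in Section 5) by collapsing the `wedge' $\X\times\{Y^\bot\}\cup\{X^\bot\}\times\Y$ onto a single bottom atom. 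Equivalently, $(\X,R)\otimes(\Y,S)$ is obtained by lifting a suitable subobject; the monoidal unit is $\mathbf 1_\perp$, exactly as stated just above. The key adjunction to establish is
\begin{equation}\label{eq:smash-hom}
\qCPO_{\perp!}\big((\X,R)\otimes(\Y,S),(\Z,T)\big)\;\iso\;\qCPO_{\perp!}\big((\X,R),[(\Y,S),(\Z,T)]_{\perp!}\big),
\end{equation}
where $[(\Y,S),(\Z,T)]_{\perp!}$ is the sub–quantum-cpo of $[(\Y,S),(\Z,T)]$ spanned by the atoms corresponding to strict maps, pointed at the atom for the zero-on-$Y^\bot$... rather, at the constant map to $Z^\bot$ (which is strict). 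I would check this atom is one-dimensional and dominated by every other atom in the inherited order, so that $[(\Y,S),(\Z,T)]_{\perp!}$ is genuinely pointed, and that it is closed under the limits of Definition \ref{nearrow}, hence a quantum cpo.

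Next I would verify \eqref{eq:smash-hom} by transporting the $\qCPO$-level adjunction $\qCPO((\X,R)\times(\Y,S),(\Z,T))\iso\qCPO((\X,R),[(\Y,S),(\Z,T)])$ through the two (co)limit constructions. On the left, strict maps out of the smash product correspond, via the universal property of the wedge quotient, to maps out of $(\X,R)\times(\Y,S)$ that are strict in each variable separately; on the right, a map $(\X,R)\to[(\Y,S),(\Z,T)]$ is strict and lands in the strict-maps subobject iff its mate is strict in each variable. Matching these two conditions is the heart of the argument and is the step I expect to be the main obstacle: one must show that ``strict in each variable'' on the product side is exactly detected, through the curry/uncurry bijection on atoms, by the pair of conditions ``factors through the wedge quotient'' and ``mate has strict components.'' This requires unwinding how the internal hom of $\qCPO$ acts on atoms — i.e.\ the description of atoms of $[\X,\Y]$ via unital normal $*$-homomorphisms from Section 3 — and checking that the bottom-atom conditions $F(X^\bot,-)=L(X^\bot,-)$ translate correctly across the duality with operator algebras. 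I would lean on Proposition \ref{prop:tensor product reflects order} and its $\qCPO$ analog to control the order on homsets under $\times$, which is what makes the two separate-strictness conditions line up with a single strictness condition on the smash product.

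Finally, for the second assertion — that $(-)_\perp:\qCPO\to\qCPO_{\perp!}$ is strong monoidal — I would exhibit the coherence isomorphism $(\X)_\perp\otimes(\Y)_\perp\iso(\X\times\Y)_\perp$ and the unit isomorphism $\mathbf 1_\perp\iso(\mathbf 1)_\perp$ (the latter is immediate from the definition of the monoidal unit of $\qCPO_{\perp!}$ quoted above). The content is the first isomorphism: lifting $\X$ adjoins a fresh one-dimensional bottom atom, so $(\X)_\perp\times(\Y)_\perp$ has a wedge consisting precisely of the atoms one of whose factors is the fresh bottom, and collapsing that wedge recovers $\X\times\Y$ with a single adjoined bottom atom, i.e.\ $(\X\times\Y)_\perp$. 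I would check naturality in $\X$ and $\Y$ and the two coherence pentagons/triangles; since lifting and the smash product are both built from limits/colimits in $\qRel$ that the forgetful functors to $\qRel$ preserve, these diagrams reduce to the corresponding (already-known) coherence diagrams for $\times$ on $\qRel$, together with the universal property of the quotient. The commutativity of Diagram \ref{diag:lift quote} then gives, by restriction, the classical fact that $(-)_\perp:\CPO\to\CPO_{\perp!}$ is strong monoidal for the smash product, as a sanity check.
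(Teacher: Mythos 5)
The paper itself does not prove this proposition: the only trace of an argument is the one-line remark in Appendix \ref{app:main theorem} (``As with $\CPO_{\perp!}$, one also can show that $\qCPO_{\perp!}$ is monoidal closed''), so your proposal can only be judged against the techniques the paper demonstrates elsewhere, chiefly Lemma \ref{lem:monoidal closure lemma} and Theorem \ref{thm:qPOS is monoidal closed}. Your overall architecture (strict internal hom as a subobject of the $\qCPO$-hom pointed at the constant-to-bottom map, a smash product with the bistrictness universal property, the adjunction obtained by restriction, and $(\X)_\perp\otimes(\Y)_\perp\iso(\X\times\Y)_\perp$ for strong monoidality) is the right shape. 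However, there is a genuine gap in how you define the strict hom. You describe $[(\Y,S),(\Z,T)]_{\perp!}$ as ``the sub-quantum-cpo spanned by the atoms corresponding to strict maps,'' but in $\qSet$ only the \emph{one-dimensional} atoms of an internal hom correspond to individual maps; a $d$-dimensional atom corresponds to an irreducible quantum family $\rho\colon \ell^\infty(\Z)\to\ell^\infty(\Y)\,\overline\otimes\, L(H_d)$, and ``is a strict map'' is not a property such an atom has. The subobject must instead be cut out by a relational equation in the style the paper uses throughout, e.g.\ as the largest subobject $\W\subseteq[\Y,\Z]_\sqsubseteq$ with $\Eval\circ(J_\W\times K_{\perp_\Y})=K_{\perp_\Z}\circ\,!_\W$, after which pointedness and closure under the limits of Definition \ref{nearrow} have to be verified for that relationally defined $\W$. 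Your later remark about unwinding atoms via $*$-homomorphisms shows you sense this, but the definition as stated does not determine an object.

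The second gap concerns the wedge quotient. Theorem \ref{qCPO theorem} asserts that $\qCPO$ is complete and has coproducts; it does \emph{not} assert cocompleteness, so the pushout collapsing $\X\times\{Y^\perp\}\cup\{X^\perp\}\times\Y$ is not available off the shelf in $\qCPO$ or $\qCPO_{\perp!}$. Even granting the underlying pushout in $\qSet$ (which is cocomplete), the induced relation on the quotient must be shown to be antisymmetric and $\omega$-complete, and antisymmetry is precisely the delicate point in the quantum setting: the bulk of the proof of Theorem \ref{thm:qPOS is monoidal closed} is the equivalence-relation/unitary argument ruling out a nontrivial $Q\wedge Q^\dagger$. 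Your proposal silently assumes the quotient exists in $\qCPO_{\perp!}$ with the expected universal property, which is the step most likely to fail or, at best, to require a concrete construction of $\otimes$ (underlying quantum set, order, and completeness checked by hand, in the style of Theorem \ref{prop:monoidal product of qCPOs}) rather than an appeal to an abstract colimit. The strong-monoidality computation on atoms is plausible but inherits this dependence.
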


Proposition \ref{prop:tensor product reflects order} implies the monoidal product on $\qCPO_{\perp!}$ reflects the order on  non-zero morphisms with domain $\mathbf 1_\perp$, the monoidal unit of $\qCPO_{\perp!}$.
\begin{proposition}\label{prop:tensor product in qCPObs reflects order}
Let $(\X_1,R_1)$ and $(\X_2,R_2)$ be pointed quantum cpos, and let $F_1,G_1:\mathbf 1_\perp\to\X_1$ and $F_2,G_2:\mathbf 1_\perp\to\X_2$ be strict Scott continuous functions such that $F_i\neq 0_{\mathbf 1_1\X_i}$ for $i=1,2$. Then both $F_1\sqsubseteq G_1$ and $F_2\sqsubseteq G_2$ if and only if $F_1\otimes F_2\sqsubseteq G_1\otimes G_2$. 
\end{proposition}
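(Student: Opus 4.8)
The plan is to move the statement across a natural order-isomorphism between strict Scott continuous maps out of $\mathbf 1_\perp$ and ``points'', identify $\otimes$ with $\times$ on the relevant data, and then invoke Proposition~\ref{prop:tensor product reflects order}. First I would establish, for any pointed quantum cpo $\A$ with bottom atom $A^\bot$, an order-isomorphism $\qCPO_{\perp!}(\mathbf 1_\perp,\A)\cong\qSet(\mathbf 1,\A)$. Since $\mathbf 1_\perp=(\mathbf 1)_\perp$, the adjunction between the lift functor and the inclusion $\qCPO_{\perp!}\to\qCPO$ gives a bijection $\qCPO_{\perp!}(\mathbf 1_\perp,\A)\cong\qCPO(\mathbf 1,\A)$, realised concretely by restricting a strict map $F\colon\mathbf 1_\perp\to\A$ along the inclusion $\mathbf 1\hookrightarrow\mathbf 1_\perp$ picking out the unique non-bottom atom of $\mathbf 1_\perp$; the inverse sends a function $\mathbf 1\to\A$ to its unique strict extension, which is Scott continuous because $\mathbf 1_\perp$ is finite (Proposition~\ref{finite quantum cpos}). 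Moreover $\mathbf 1$ is terminal in $\qSet$, so each $\qSet(\H_d,\mathbf 1)$ is a singleton, the only increasing $\omega$-sequences $\H_d\to\mathbf 1$ are constant, and hence $\qCPO(\mathbf 1,\A)=\qPOS(\mathbf 1,\A)=\qSet(\mathbf 1,\A)$ as posets. Finally, unwinding $F\sqsubseteq G\Leftrightarrow G\le R_\A\circ F$ entrywise, the $A^\bot$-component of this inequality is automatic, by strictness of $F$ and by $R_\A(A^\bot,\cdot)=L(A^\bot,\cdot)$; so $F\sqsubseteq G$ in $\qCPO_{\perp!}(\mathbf 1_\perp,\A)$ iff $\hat F\sqsubseteq\hat G$ in $\qSet(\mathbf 1,\A)$, writing $\hat F$ for the point corresponding to $F$. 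Applying this with $\A=\X_i$ rewrites ``$F_i\sqsubseteq G_i$'' as ``$\hat F_i\sqsubseteq\hat G_i$''.

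Next I would identify $F_1\otimes F_2$ under this correspondence. The smash-type product on $\qCPO_{\perp!}$ is constructed so that $\X_1\otimes\X_2$ has a single fresh bottom atom together with an atom $X_1\tensor X_2$ for each pair of non-bottom atoms $X_i\in\At(\X_i)$, and so that its order restricted to these non-bottom atoms coincides with the order of the quantum poset $\X_1\times\X_2$ — this is exactly what makes $(-)_\perp$ strong monoidal. Because $F_i\neq 0_{\mathbf 1_\perp\X_i}$, the point $\hat F_i$ is a non-bottom atom of $\X_i$, and tracing through $\otimes$ on morphisms shows that the point $\widehat{F_1\otimes F_2}\colon\mathbf 1\to\X_1\otimes\X_2$ determined by $F_1\otimes F_2\colon\mathbf 1_\perp\cong\mathbf 1_\perp\otimes\mathbf 1_\perp\to\X_1\otimes\X_2$ is the atom $X_1\tensor X_2$ that $\hat F_1\times\hat F_2\colon\mathbf 1\times\mathbf 1\to\X_1\times\X_2$ picks out inside $\X_1\times\X_2$. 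Combined with the agreement of the two orders on non-bottom atoms and the entrywise comparison above, this gives $F_1\otimes F_2\sqsubseteq G_1\otimes G_2$ iff $\hat F_1\times\hat F_2\sqsubseteq\hat G_1\times\hat G_2$; in the degenerate case where some $G_i=0_{\mathbf 1_\perp\X_i}$, both sides are false (the left because $\otimes$ absorbs the zero morphism while $F_1\otimes F_2\neq0$, the right because $\hat G_i$ is then the bottom atom, below which no non-bottom atom lies), so the equivalence holds unconditionally.

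Chaining the three equivalences, $F_1\otimes F_2\sqsubseteq G_1\otimes G_2$ iff $\hat F_1\times\hat F_2\sqsubseteq\hat G_1\times\hat G_2$ iff ($\hat F_1\sqsubseteq\hat G_1$ and $\hat F_2\sqsubseteq\hat G_2$) iff ($F_1\sqsubseteq G_1$ and $F_2\sqsubseteq G_2$), where the middle equivalence is Proposition~\ref{prop:tensor product reflects order} with domains $\mathbf 1,\mathbf 1$ and codomains $\X_1,\X_2$; since every link is an ``iff'', this settles both directions at once. The step I expect to be the real work is the identification of $F_1\otimes F_2$ with $\hat F_1\times\hat F_2$: it relies on the explicit atom-and-order description of the smash product $\otimes$ on $\qCPO_{\perp!}$, on the verification that this product restricts to the monoidal product $\times$ on the non-bottom atoms, and on checking that $\otimes$ of morphisms selects the expected atom. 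The order-isomorphism of the first paragraph and the degenerate-case bookkeeping are routine by comparison.
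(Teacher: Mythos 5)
Your proposal follows exactly the route the paper itself indicates: the text introduces this proposition with the remark that ``Proposition~\ref{prop:tensor product reflects order} implies the monoidal product on $\qCPO_{\perp!}$ reflects the order on non-zero morphisms with domain $\mathbf 1_\perp$,'' and gives no further proof, so your reduction via the order-isomorphism $\qCPO_{\perp!}(\mathbf 1_\perp,\A)\cong\qSet(\mathbf 1,\A)$ and the identification of $F_1\otimes F_2$ with $\hat F_1\times\hat F_2$ is the intended argument. The bookkeeping you do (the $A^\bot$-component of $G\le R\circ F$ being automatic by strictness and pointedness, and the degenerate case $G_i=0$ making both sides false) is correct. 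The one caveat, which you already flag as ``the real work,'' is that the paper never actually writes down the atom-and-order description of the smash product on $\qCPO_{\perp!}$ (fresh bottom atom plus $X_1\tensor X_2$ for non-bottom atoms, with the order on the non-bottom part restricting to $R_1\times R_2$); your argument is only as complete as that unproved but entirely expected description of $\otimes$, so a self-contained write-up would need to extract it from the construction of $\otimes$ rather than assert it.
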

Thus the monoidal product on $\qCPO_{\perp!}$ behaves similarly to the smash product on $\CPO_{\perp!}$.
Our main theorem provides the LNL model defined by $\textbf{CPO}$ and $\textbf{qCPO}_{\perp !}$:

\begin{theorem}\label{thm:main theorem}
The category $\qCPO_{\perp!}$ has all coproducts, is complete, symmetric monoidal closed, is enriched over $\CPO_{\perp!}$, and has a zero object that is e-initial. Furthermore both compositions in Diagram (\ref{diag:lift quote}) of Section~\ref{sec:intro} form the left adjoint component of the model in Diagram (\ref{qCPO model}). 
\end{theorem}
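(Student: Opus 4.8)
The plan is to verify each of the six asserted properties of $\qCPO_{\perp!}$ in turn, drawing on the corresponding facts already established for $\qCPO$ in Theorem~\ref{qCPO theorem} and then applying the lift–inclusion adjunction of the preceding lemma. First, for \emph{coproducts}: the inclusion $\qCPO_{\perp!}\hookrightarrow\qCPO$ has a left adjoint $(-)_\perp$, so coproducts in $\qCPO_{\perp!}$ are computed by applying $(-)_\perp$ to the coproduct of the underlying objects in $\qCPO$ (equivalently, one checks directly that the coproduct in $\qCPO_{\perp!}$ is a coalesced sum, identifying the bottom atoms). For \emph{completeness}, since $\qCPO_{\perp!}$ is a full reflective subcategory of the complete category $\qCPO$, it inherits all limits; one must only check that the $\qPOS$-limit formula $R = \bigwedge_\alpha J_\alpha^\dagger\circ R_\alpha\circ J_\alpha$ produces a pointed object, i.e.\ that the atom $\prod_\alpha X_\alpha^\bot$ (which is one-dimensional, being a tensor of one-dimensional atoms) serves as $X^\bot$. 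For \emph{symmetric monoidal closedness}, I would invoke the Proposition just above, which already asserts that $\qCPO_{\perp!}$ is monoidal closed with $\otimes$ and unit $\mathbf 1_\perp$; symmetry is inherited from the symmetry of $\times$ on $\qCPO$ through the strong monoidal functor $(-)_\perp$.

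For \emph{$\CPO_{\perp!}$-enrichment}: we already know from Theorem~\ref{qCPO theorem} that $\qCPO$ is $\CPO$-enriched, and Lemma~\ref{lem:order enrichment} tells us $\qCPO_{\perp!}(\X,\Y)$ sits inside $\qSet(\X,\Y)$ with the induced order; one checks the zero map $0_{\X\Y}$ (sending $X^\bot$ to $Y^\bot$ trivially, else $0$) is the least strict Scott continuous map, making the hom-cpos pointed, and that composition is strict and Scott continuous in each variable, using the commuting square (\ref{diag:lift quote}) to see that the enrichment restricts compatibly to $\CPO_{\perp!}$. For the \emph{zero object}: the quantum cpo $\mathbf 1 = \{\CC\}$ with its unique order, once lifted, has a single one-dimensional atom that is simultaneously $X^\bot$, so $\mathbf 1_\perp$ is both initial (by the strictness condition forcing the unique strict map out) and terminal (the unique function into a one-atom quantum set), hence a zero object; e-initiality is the statement that the unique map $\mathbf 1_\perp\to\X$ is an embedding in the e–p sense, which follows because its corresponding projection is the (strict, Scott continuous) map collapsing $\X$ onto its bottom. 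Finally, the claim about Diagram (\ref{diag:lift quote}): the horizontal arrows $`(-)$ are the inclusions and the vertical arrows $(-)_\perp$ the lifts, each the left adjoint of a symmetric monoidal adjunction; commutativity of the square of left adjoints was noted when the diagram was introduced, and composing either way gives $\CPO\to\qCPO_{\perp!}$, which together with its right adjoint (the composite of the two right adjoints) is exactly the model (\ref{qCPO model}), strong monoidal because a composite of strong monoidal functors is strong monoidal.

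The main obstacle I anticipate is the interaction between the \emph{monoidal closed structure} and the \emph{pointedness/strictness conditions} — specifically, verifying that the internal hom in $\qCPO_{\perp!}$ (the object of strict Scott continuous maps) is again pointed and that the tensor–hom adjunction respects strictness in every slot. This is the quantum analog of the subtlety that the smash product and the strict function space on $\CPO_{\perp!}$ require the coalesced-sum/identified-basepoint bookkeeping, and here it is complicated further by the fact that "pointed" involves a one-dimensional atom $X^\bot$ absorbing all of $L(X^\bot,X)$ under $R$, a condition that must be shown stable under forming $\X\multimap\Y$. I would handle this by reducing, via the generating family $\H_1,\H_2,\dots$, to checking the defining limit property of Definition~\ref{def:quantcpo}(a) on the hom-object atom by atom, and by exhibiting the basepoint of $\X\multimap\Y$ explicitly as the constant map at $Y^\bot$. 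The e-initiality of the zero object is the other point requiring genuine care, since it is a property of the embedding–projection structure rather than a bare universal property, but it should follow formally once the $\CPO_{\perp!}$-enrichment is in place.
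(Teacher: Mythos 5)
There is a genuine error at the center of your argument: you identify the zero object of $\qCPO_{\perp!}$ as $\mathbf 1_\perp$, but the zero object is $\mathbf 1$ itself (with its trivial order; its single one-dimensional atom already satisfies $R(X^\bot,X^\bot)=L(X^\bot,X^\bot)$, so $\mathbf 1$ is pointed without lifting). The unique strict map $\mathbf 1\to\X$ picks out the bottom atom $X^\bot$, and the unique map $\X\to\mathbf 1$ is automatically strict, so $\mathbf 1$ is both initial and terminal in $\qCPO_{\perp!}$; the paper then shows this unique strict map $\mathbf 1\to\X$ is an embedding, whence e-initiality. By contrast $\mathbf 1_\perp$ has \emph{two} one-dimensional atoms and is the monoidal unit $I$, not the zero object; it is not initial, since a strict map out of it may send its non-bottom atom to any suitable one-dimensional atom of the target. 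This is not a cosmetic slip: the paper's adequacy argument (Theorem~\ref{thm:LNL-FPC}) hinges precisely on $I\ncong 0$ in $\qCPO_{\perp!}$, which your identification would destroy.

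A second problem is your repeated appeal to $\qCPO_{\perp!}$ being a \emph{full} reflective subcategory of $\qCPO$. The inclusion is not full --- a Scott continuous map between pointed quantum cpos need not be strict --- so neither the ``limits are inherited by full reflective subcategories'' principle nor the ``colimits are computed by reflecting the ambient colimit'' formula applies. Concretely, $(\X\uplus\Y)_\perp$ and the coalesced sum are \emph{not} equivalent descriptions of the coproduct (already in $\CPO_{\perp!}$: the coproduct of the one-point cpo with itself is the one-point cpo, not the three-point lift of a two-point antichain); the paper takes the coalesced-sum route, and your parenthetical ``direct check'' is the correct one while the reflector formula is wrong. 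For limits the paper likewise argues directly that the limit computed in $\qCPO$ is pointed \emph{and} that the limiting morphisms are strict (the second point is needed precisely because fullness fails, and your sketch omits it); also note that the bottom atom of a limit is not literally a tensor product $\prod_\alpha X_\alpha^\bot$, since the categorical product in $\qSet$ is not the monoidal product $\times$. The remaining items --- monoidal closure via the preceding Proposition, $\CPO_{\perp!}$-enrichment from pointedness of the homsets, and the composite adjunction giving the model (\ref{qCPO model}) --- match the paper's (very terse) argument.
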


\section{Models of quantum programming languages}
Proto-Quipper-M (PQM) is a circuit description language designed to support writing quantum circuits in a high-level functional language and then displaying the evolution of the circuit as a diagram. The ``computer" for which PQM is intended is Knill's QRAM model -- a classical computer with a quantum coprocessor.  Here's an example of the sort of circuit PQM is meant to define. We present the circuit in $\textbf{qSet}$ for simplicity. It can then be imported intact into $\textbf{qCPO}$ by the functor $\X \mapsto (\X,I_\X)$, where $I_\X$ is the trivial order, and then into $\textbf{qCPO}_{\perp !}$ using the lift functor. 

\begin{example}
The quantum Fourier transform is a channel from an $n$-qubit system to itself, and it is given by a sequence of basic quantum gates: the Hadamard gate and rotation gates \cite[section 5.1]{NielsenChuang}. For fixed $n$, the quantum Fourier transform is thus modelled by a function on the quantum set $\H_2^n = \H_2 \times \cdots \times \H_2$ that can be expressed as a composition of these basic gates.

\begin{figure}[t!]
\begin{center}
\includegraphics[width=.20\textwidth,angle=-90]{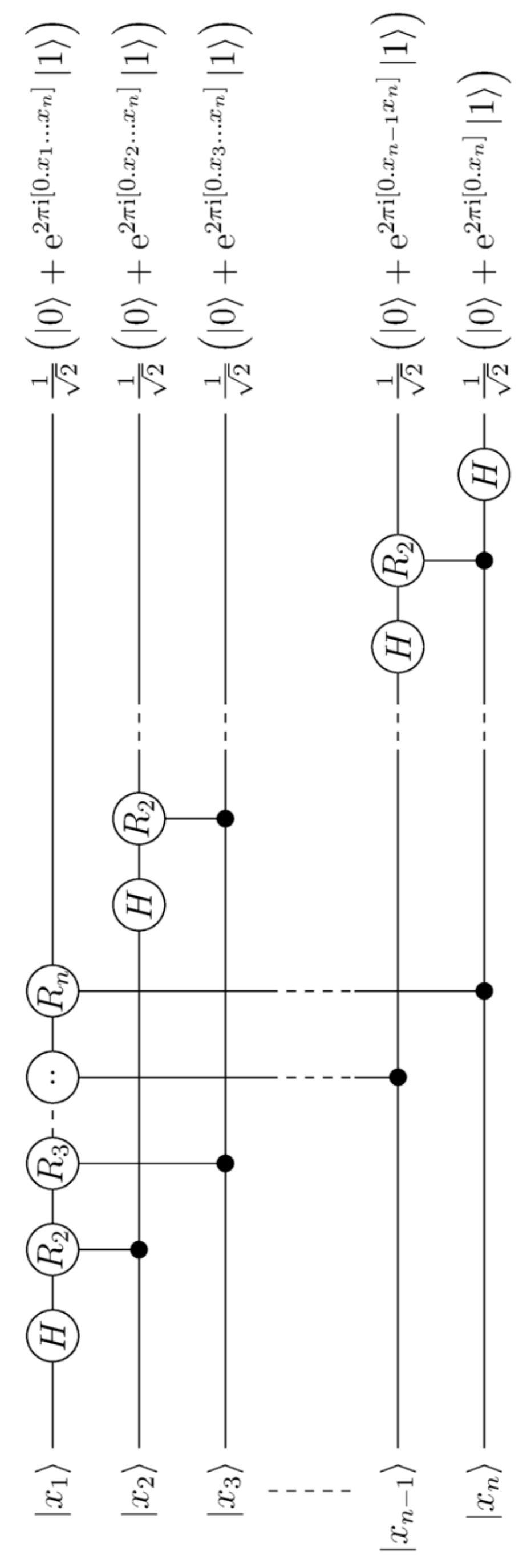}
\caption{The Quantum Fourier Transform~ \cite{wikipedia}}
\end{center}
\end{figure}

We model a quantum system consisting of an unknown number of qubits by the quantum set $\X = \mathbf 1 + \H_2 + (\H_2 \times \H_2) + \cdots$. Indeed, a pure  state on $\ell^\infty(\X)$ is just a pure state on $\ell^\infty(\H_2^n)$ for some natural number $n$. A mixed state may be supported on more than one summand, expressing uncertainty about just how many qubits there are! Note that $\X$ is a recursive type over $\H_2$; it is the list type. We have a function $N\: \X \To `\NN$ that maps each atom to its dimension; in other words, $N$ is defined by $N(H_2^{\tensor n}, \CC_n) = L(H_2^{\tensor n}, \CC_n)$, for all $n\in \NN$, with the other components vanishing. It does not play a direct role in the following description of the quantum Fourier transform.

Using the standard characterization of $\X$ as a fixed point, i.e., $\X \iso \mathbf 1 + \X \times \H_2$, we may define the term $x: \X \vdash \mathrm{Swap}(x):\X$, which reverses the order of the list, in the usual way. Suppressing this isomorphism, we ``unfold'' $x: X$, writing it as either $x = \ast: \mathbf 1$ or $x = (x_1, y): \X \times \H_2$, and then we define $\mathrm{Swap}(\ast) = \ast$ and $\mathrm{Swap}(x_1, y_1) = (y_1, \mathrm{Swap}(x_1))$. We curry to obtain a function $\mathrm{Swap}: \X \multimap \X$.

Using the same approach, we can define a function $\mathrm{Step}: \X \multimap \X$ modelling each computational step of the algorithm. To formalize this definition, we assume a controlled phase gate $\mathrm{R}: \H_2 \times \H_2 \times `\NN \multimap \H_2 \times \H_2 $, which rotates the phase of the up state of the second qubit by $2\pi/2^n$, whenever the first qubit is in the up state, or equivalently, vice versa. As before, we unfold $x : \X$. If $x = \ast$, we define $\mathrm{Step}(x) = \ast$. If $x = (x_1, y)$, then we unfold $x_1$. If $x_1 = \ast$, then we define $\mathrm{Step}(x) = Hy$, where $H\: \H_2 \multimap \H_2$ is the Hadamard gate, and if $x_1 = (x_2, y_2)$, then we define $\mathrm{Step}(x)$ to be the result of applying a controlled phase gate rotating by $2\pi/2^n$ to $(\mathrm{Step}(x_2),y_2)$. The dimension $n$ is classical data, so it can be duplicated and used any number of times at each recursion step, but $\X$ is a quantum type, so \emph{a priori}, we cannot evaluate the function $N$ without consuming the quantum data in any given step in the recursion. For this reason, this recursion should occur over $\X \times `\NN$, so we can increment the dimension separately.

The swapped quantum Fourier transform $\mathrm{QFT_0}\: \X \multimap \X$, depicted in the diagram, can now be similarly defined for $x : \X$ distinct from $\ast$, by \[\mathrm{QFT_0}(x) = (\tilde y_1, \mathrm{QFT_0}(\mathrm{Swap}(\tilde x_1)),\] where $(\tilde x_1, \tilde y_1) = \mathrm{Swap}(\mathrm{Step}(x))$. Of course, the quantum Fourier transform itself is just $\mathrm{QFT} = \mathrm{Swap} \circ \mathrm{QFT_0}$.

\end{example}

There are some important features that this example illustrates. First, lists of qubits are needed to model the memory of the quantum device, so any model for this circuit must have support for (at least) inductive types. Second, we don't know the number of qubits in the state on which the program will run when the circuit is written, so it must be able to respond to any input length. Since the QFT itself is inherently (primitive) recursive, the model must support primitive recursion. While these are the simplest forms of recursion, they make clear the need for these features. Finally, our model also includes $\textbf{FdHilb}$, which allows us to reason concretely about the effect of QFT on a given list of qubits. 

Returning to our results, we first consider type recursion.  Models built over $\mathbf{CPO}$ support type recursion if all the type constructors are \emph{algebraically compact}~\cite{fiore-thesis}. 
A direct consequence of Theorem \ref{thm:main theorem} is that the $\CPO$/$\qCPO$ model (\ref{qCPO model}) in Section \ref{sec:intro} satisfies all conditions for a $\CPO$-LNL model (cf.~\cite[Definition 5.3.1]{LMZ20}), hence we can apply \cite[Theorem 3.2]{LMZ18} and \cite[Theorem 5.3.3]{LMZ20} to conclude:
\begin{theorem}\label{thm:algebraically compact}
All $\CPO$-endofunctors on $\qCPO_{\perp!}$ are algebraically compact. In particular, $\otimes$, $+$, $\multimap$ and $!$ are $\CPO$-functors, hence endofunctors on $\qCPO_{\perp!}$ that are compositions of these four functors are algebraically compact.
\end{theorem}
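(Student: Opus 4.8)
The plan is to reduce the statement entirely to already-established structure plus two external theorems, so that almost no new work is needed. The key observation is that Theorem~\ref{thm:main theorem} gives us precisely the package of properties that the abstract recursion theory of \cite{LMZ18} and \cite{LMZ20} requires: $\qCPO_{\perp!}$ has all coproducts, is complete, is symmetric monoidal closed, is $\CPO_{\perp!}$-enriched, and has a zero object that is e-initial; moreover the two composites in Diagram~(\ref{diag:lift quote}) furnish the left adjoint of the LNL model~(\ref{qCPO model}) with non-linear base $\CPO$. So the first step is simply to check, item by item, that these facts are exactly the hypotheses of \cite[Definition 5.3.1]{LMZ20}, i.e. that $(\CPO, \qCPO_{\perp!}, F, G)$ is a \emph{$\CPO$-LNL model} in their sense. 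This is bookkeeping: enrichment over $\CPO$ (from Theorem~\ref{qCPO theorem}) and over $\CPO_{\perp!}$ (from Theorem~\ref{thm:main theorem}) on the two sides, the monoidal closed structures, the adjunction $F \dashv G$, and the zero object / e-initiality condition that powers the limit-colimit coincidence.

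The second step is to invoke \cite[Theorem 3.2]{LMZ18}, which states that in any $\CPO$-LNL model of the above kind every $\CPO$-enriched endofunctor on the linear category is algebraically compact — the proof being the classical Freyd-style argument that in a $\CPO$-enriched (indeed, $\CPO_{\perp!}$-enriched with e-initial zero object) category admitting the relevant colimits, the initial algebra of a locally continuous endofunctor coincides with the final coalgebra, obtained as the colimit of the initial $\omega$-chain $0 \to F0 \to F^2 0 \to \cdots$ along the embedding-projection pairs. Since $\qCPO_{\perp!}$ is $\CPO_{\perp!}$-enriched and $\omega$-cocomplete (it has all coproducts and is complete, and the e-initial zero object supplies the e-p structure), this applies verbatim; I would state it as: every $\CPO$-endofunctor $T\colon \qCPO_{\perp!} \to \qCPO_{\perp!}$ has an initial algebra whose structure map is an isomorphism and which is simultaneously a final coalgebra. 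The parametrised version — needed so that type constructors in several variables yield well-behaved recursive types — follows from \cite[Theorem 5.3.3]{LMZ20}, which upgrades algebraic compactness to \emph{parametric} algebraic compactness for $\CPO$-LNL models; here one also records that $\otimes$, $+$, $\multimap$ and $!$ are each $\CPO$-functors (the first three from monoidal closedness and coproducts being $\CPO$-enriched constructions, and $!$ from the comonad of the LNL adjunction composed with the lift), so that any endofunctor built by composing them is again a $\CPO$-functor and hence falls under the theorem.

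The only genuine obstacle is the first step — verifying that the concrete categorical data of Theorem~\ref{thm:main theorem} really does satisfy \emph{every} clause of the abstract definition of a $\CPO$-LNL model, in particular the compatibility between the $\CPO_{\perp!}$-enrichment, the monoidal structure, and the e-initiality of the zero object (this is what makes embedding-projection pairs behave and the limit-colimit coincidence go through). Everything earlier in the paper has been arranged so that this check is routine, but it is the load-bearing part: once it is in place, \cite[Theorem 3.2]{LMZ18} and \cite[Theorem 5.3.3]{LMZ20} apply as black boxes and the theorem follows immediately. I would therefore devote the bulk of the written proof to that verification and dispatch the rest in two sentences.
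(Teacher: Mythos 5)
Your proposal matches the paper's own argument: the paper likewise observes that Theorem~\ref{thm:main theorem} shows the adjunction~(\ref{qCPO model}) satisfies all the conditions of a $\CPO$-LNL model in the sense of \cite[Definition 5.3.1]{LMZ20}, and then invokes \cite[Theorem 3.2]{LMZ18} and \cite[Theorem 5.3.3]{LMZ20} to conclude algebraic compactness of all $\CPO$-endofunctors on $\qCPO_{\perp!}$. Your additional remarks on why $\otimes$, $+$, $\multimap$ and $!$ are $\CPO$-functors, and on where the real verification effort lies, are consistent with the paper's (terser) treatment.
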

Moreover, by \cite[Theorem 6.3.9]{LMZ20} any $\CPO$-LNL model is a sound model for LNL-FPC, i.e., the language introduced in \cite{LMZ20} that can both be seen as an FPC with linear types and as an extension of the circuit-free fragment of PQM with recursive types. By \cite[Theorem 7.0.10]{LMZ20} any $\CPO$-LNL model is computationally adequate at non-linear types if $I\ncong 0$ (which is the case for $\qCPO_{\perp!}$ the zero object is $\mathbf 1$, whereas the monoidal unit $I$ is $\mathbf 1_\perp$) and if the monoidal product reflects the order as in Proposition \ref{prop:tensor product in qCPObs reflects order}. So we conclude:
\begin{theorem}\label{thm:LNL-FPC}
The linear/non-linear adjunction (\ref{qCPO model}) is a sound model for LNL-FPC that is computationally adequate at non-linear types.
\end{theorem}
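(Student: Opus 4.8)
The plan is to derive this result by verifying that the adjunction (\ref{qCPO model}) is an instance of the abstract notion of a $\CPO$-LNL model and then quoting the general soundness and adequacy theorems of \cite{LMZ20}. The first step is to recall the precise list of axioms for a $\CPO$-LNL model (\cite[Definition 5.3.1]{LMZ20}): a Cartesian closed $\CPO$-category $\mathbf C$ carrying a uniform fixpoint operator; a symmetric monoidal closed $\CPO$-category $\mathbf L$ that is complete, has countable coproducts, and has a zero object satisfying the relevant e-initiality (limit--colimit coincidence) property; and a symmetric monoidal $\CPO$-adjunction $F\dashv G$ between them. Instantiating $\mathbf C=\CPO$ and $\mathbf L=\qCPO_{\perp!}$ with $F,G$ as in (\ref{qCPO model}), essentially every clause is delivered by Theorem \ref{thm:main theorem}: $\qCPO_{\perp!}$ is complete, has all coproducts, is symmetric monoidal closed, is enriched over $\CPO_{\perp!}$ (hence over $\CPO$), and has an e-initial zero object, and the composite in (\ref{diag:lift quote}) is the left adjoint of the stated monoidal $\CPO$-adjunction; the uniform fixpoint operator on $\CPO$ is the classical one. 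The final ingredient, needed to interpret the recursive types of LNL-FPC, is algebraic compactness of the relevant type constructors, which is exactly Theorem \ref{thm:algebraically compact}.

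With the model recognized as a $\CPO$-LNL model, soundness for LNL-FPC is then immediate from \cite[Theorem 6.3.9]{LMZ20}, which asserts that every $\CPO$-LNL model soundly interprets LNL-FPC; here one only checks that the type constructors $\otimes$, $+$, $\multimap$, $!$ and the lifting comonad $FG$ of LNL-FPC are interpreted by the corresponding $\CPO$-functors on $\qCPO_{\perp!}$, which is again covered by Theorem \ref{thm:algebraically compact}. For computational adequacy at non-linear types I would appeal to \cite[Theorem 7.0.10]{LMZ20}, which gives the conclusion under two side conditions on $\mathbf L$: $I\ncong 0$, and the monoidal product reflects the order on nonzero morphisms out of $I$. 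The first holds since in $\qCPO_{\perp!}$ the zero object is $\mathbf 1$ while the monoidal unit is $I=\mathbf 1_\perp$, and these are not isomorphic; the second is precisely Proposition \ref{prop:tensor product in qCPObs reflects order}. Combining these citations yields the theorem.

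The main obstacle is not located in this argument but is instead folded into the appeal to Theorem \ref{thm:main theorem}: establishing that $\qCPO_{\perp!}$ satisfies \emph{every} clause of \cite[Definition 5.3.1]{LMZ20}, in particular the compatibility of the $\CPO$-enrichment with the monoidal closed structure (so that $\multimap$ is a $\CPO$-bifunctor), the e-initiality of the zero object in the technical sense underlying the limit--colimit coincidence, and the coherence of the smash-like product with the lift adjunction. A secondary subtlety is matching the interpretation of recursive types obtained from parametrized algebraic compactness in Theorem \ref{thm:algebraically compact} with the canonical fixed points demanded by the operational semantics of LNL-FPC; this, however, is handled uniformly for arbitrary $\CPO$-LNL models in \cite{LMZ20}, so no new work is required here beyond the structural verifications above.
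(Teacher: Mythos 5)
Your proposal is correct and follows essentially the same route as the paper: recognize (\ref{qCPO model}) as a $\CPO$-LNL model via Theorem \ref{thm:main theorem} (with algebraic compactness from Theorem \ref{thm:algebraically compact}), then cite \cite[Theorem 6.3.9]{LMZ20} for soundness and \cite[Theorem 7.0.10]{LMZ20} for adequacy, checking the two side conditions $I\ncong 0$ and order reflection via Proposition \ref{prop:tensor product in qCPObs reflects order}. Your closing observation that the substantive work is concentrated in verifying the clauses of the $\CPO$-LNL definition is also consistent with how the paper structures the argument.
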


We now consider term recursion. PQM (including circuits) is extended with term recursion in \cite{LMZ18}. We can add the following result: 

\begin{theorem}\label{thm:PQM}
The linear/non-linear adjunction (\ref{qCPO model}) is a sound model for PQM with term recursion.
\end{theorem}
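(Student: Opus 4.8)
The plan is to verify that the linear/non-linear adjunction (\ref{qCPO model}) satisfies the definition of a model for Proto-Quipper-M with term recursion, as laid out in \cite{LMZ18}, and then invoke the soundness theorem proved there. Recall from \cite{pqm-small,LMZ18} that a model of PQM (with circuits and term recursion) consists of: a symmetric monoidal closed category $\mathbf L$ that is $\CPO$-enriched (so that term recursion can be interpreted via least fixed points of Scott-continuous endomaps on homsets), a Cartesian closed category $\mathbf C$ together with a linear/non-linear adjunction $F \dashv G \colon \mathbf L \to \mathbf C$, and a distinguished symmetric monoidal subcategory $\mathbf M \hookrightarrow \mathbf L$ representing the quantum circuits, equipped with enough structure (coproducts in $\mathbf L$ agreeing with those needed for the circuit constructors, a suitable $\bbox$/$\apply$ interpretation, etc.). So the first step is simply to set $\mathbf C = \CPO$, $\mathbf L = \qCPO_{\perp!}$, take $F \dashv G$ to be the composite adjunction of Diagram (\ref{qCPO model}), and take $\mathbf M = \mathbf{FdAlg}^{\mathrm{op}}$ sitting inside $\qCPO_{\perp!}$ via the contravariant embedding mentioned in the introduction (each finite-dimensional von Neumann algebra $A = \ell^\infty(\X)$ is sent to $\X$ equipped with the trivial order $I_\X$ and then lifted).

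The second step is to discharge the structural hypotheses, almost all of which are already recorded in the excerpt. Theorem \ref{thm:main theorem} gives that $\qCPO_{\perp!}$ is symmetric monoidal closed, complete, has all coproducts, is enriched over $\CPO_{\perp!}$ (hence over $\CPO$), and has a zero object that is e-initial; it also gives that the composites in Diagram (\ref{diag:lift quote}) assemble into the left adjoint of (\ref{qCPO model}), so the linear/non-linear adjunction itself is in place. The $\CPO$-enrichment is exactly what is needed to interpret the $\rec$ construct: for a term of the appropriate type the denotation is the least fixed point of a Scott-continuous endofunction on a homset, which exists because each homset is a pointed cpo (pointedness of $\qCPO_{\perp!}$-homsets follows from the presence of the zero morphism, which is the bottom element). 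The coproducts handle the sum-type and circuit-list constructors; monoidal closure handles the linear function space and the interpretation of $\apply$. For the $\bbox$ and circuit-formation machinery one uses that $\mathbf{FdAlg}^{\mathrm{op}}$ is a symmetric monoidal subcategory of $\qCPO_{\perp!}$ (this is the novel content beyond the classical case) and that the inclusion is strong monoidal, so that the PQM axioms relating $\bbox$, $\apply$, and the circuit category transport verbatim from the abstract treatment in \cite{LMZ18}.

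The third step is to invoke \cite[Theorem 3.2]{LMZ18} (the soundness theorem for PQM-with-recursion relative to any such model), which then yields the claim. The main obstacle I anticipate is not any single computation but rather the bookkeeping needed to confirm that the embedding $\mathbf{FdAlg}^{\mathrm{op}} \hookrightarrow \qCPO_{\perp!}$ genuinely has \emph{all} the properties demanded of the circuit subcategory in the PQM model definition — in particular that it is closed under the monoidal product and coproducts in the ambient category in the way the circuit constructors require, and that the lift functor interacts correctly with it (e.g. that lifting a trivially-ordered quantum set produces a pointed quantum cpo whose non-bottom part is again $\mathbf{FdAlg}^{\mathrm{op}}$-like). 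Once that compatibility is checked, everything else is a direct appeal to the already-established categorical properties of $\qCPO_{\perp!}$ together with the abstract soundness result, exactly parallel to how the presheaf model of \cite{LMZ18} was handled, with $[\mathbf{FdAlg},\CPO_{\perp!}]$ replaced by $\qCPO_{\perp!}$.
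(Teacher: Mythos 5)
Your overall strategy --- instantiate the definition of a PQM model from \cite{LMZ18} with $\mathbf C = \CPO$, $\mathbf L = \qCPO_{\perp!}$, $\mathbf M = \mathbf{FdAlg}^{\mathrm{op}}$, check the structural hypotheses against Theorem \ref{thm:main theorem}, and invoke the soundness theorem of \cite{LMZ18} --- matches the paper's. But there is a concrete gap in the one place where something beyond bookkeeping is required, namely term recursion. The relevant soundness result is \cite[Theorem 4.3]{LMZ18} (not Theorem 3.2, which together with Definition 3.4 only pins down what a PQM model \emph{is}), and its hypothesis is not ``every homset is a pointed cpo, so take least fixed points of Scott-continuous endomaps on homsets''; it is that $!(-)$ is \emph{parametrically algebraically compact}, i.e.\ that for each fixed $A \in \mathbf L$ the endofunctor $A \otimes {!(-)}\colon \mathbf L \to \mathbf L$ is algebraically compact. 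This is exactly what Theorem \ref{thm:algebraically compact} supplies for $\qCPO_{\perp!}$, and the paper's proof consists essentially of observing that fact. Your proposal never mentions algebraic compactness, so the actual hypothesis of the theorem you are invoking is left unverified; the informal fixed-point-on-hom-cpos argument does not discharge it, because that is simply not the condition under which \cite[Theorem 4.3]{LMZ18} is stated, and the recursion variable in PQM's recursion rule lives under the $!$ modality rather than in a bare homset.

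A secondary point: the ``main obstacle'' you anticipate --- checking that $\mathbf{FdAlg}^{\mathrm{op}}$ is closed under coproducts and the monoidal product inside $\qCPO_{\perp!}$ and interacts correctly with lifting --- is not demanded by the model definition as used here, which asks only for finite coproducts in $\mathbf C$ and $\mathbf L$ themselves together with a strong monoidal embedding $E\colon \mathbf M \to \mathbf L$. So the proposal over-engineers the easy verification while omitting the decisive one.
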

\begin{proof}
By Theorem 3.2 and Definition 3.4 of \cite{LMZ18} a model for PQM consists of a symmetric monoidal category $\mathbf M$ of circuits (for quantum computing one typically chooses this category to be  $\mathbf{FdAlg}^{\mathrm{op}}$), and an LNL-adjunction of the form (\ref{diag:LNL}) such that $\mathbf C$ and $\mathbf L$ have finite coproducts, and such that there exists a strong monoidal embedding $E:\mathbf M\to\mathbf L$. By \cite[Theorem 4.3]{LMZ18} such a model is a sound model for PQM with term recursion if the functor $!(-)$ is parametrically algebraically compact, i.e., for fixed $A\in\mathbf L$, the endofunctor $A\otimes !(-):\mathbf L\to\mathbf L$ is algebraically compact. For $\qCPO_{\perp!}$, the latter condition follows from Theorem \ref{thm:algebraically compact}. Finally, we also know $\mathbf{FdAlg}^{\mathrm{op}}$ embeds into $\qCPO_{\perp!}$.
\end{proof}

To our knowledge, the only other model for which the conclusions of Theorem \ref{thm:PQM} hold is (\ref{presheaf model}), whereas the only other model for which the conclusions of Theorem \ref{thm:LNL-FPC} hold is (\ref{CPO model}). Hence our model (\ref{qCPO model}) is the  only one we know for which the conclusions of both theorems hold. 

\vspace{-.1in}
\section{Future work}\label{sec:future}
\vspace{-.1in}
Our presentation of the quantum Fourier transform in $\textbf{qCPO}_{\perp !}$ raises the issue of computational adequacy of our model. We do not have a proof, but we believe a proof strategy similar to that in \cite{LMZ20} will prove this for circuits in our model, but some details still need checking. 

Our model (\ref{qCPO model}) is the only one known that is both sound for Proto-Quipper-M extended with term recursion, and also sound and computationally adequate at non-linear types for LNL-FPC. The latter can be regarded as the circuit-free fragment of Proto-Quipper-M extended with recursive types. Our next goal is to show that (\ref{qCPO model}) is a sound model for Proto-Quipper-M extended with recursive types that is computationally adequate at non-linear types. Since our model also supports affine types, we expect that an adequacy result at all types may follow from results in~\cite{PPRZ19}. 

We are also working on quantizing the probabilistic power domain monad, whose existence would imply (\ref{qCPO model}) supports state preparation. With this in hand, we anticipate extending Proto-Quipper-M with dynamic lifting, i.e., the execution of quantum circuits. We also expect that quantizing the probabilistic power domain will yield another model for the quantum lambda calculus with term recursion \cite{quant-semantics}.

\vspace{-.15in}
\section*{Acknowledgements}                            
\vspace{-.05in}                         
Thanks to Vladimir Zamdzhiev for insightful discussions, and because this work was partly inspired by prior work with him. We also thank Xiaodong Jia for his comments. This work is supported by
  AFOSR under MURI grant FA9550-16-1-0082.


\begin{thebibliography}{10}
\providecommand{\bibitemdeclare}[2]{}
\providecommand{\surnamestart}{}
\providecommand{\surnameend}{}
\providecommand{\urlprefix}{Available at }
\providecommand{\url}[1]{\texttt{#1}}
\providecommand{\href}[2]{\texttt{#2}}
\providecommand{\urlalt}[2]{\href{#1}{#2}}
\providecommand{\doi}[1]{doi:\urlalt{http://dx.doi.org/#1}{#1}}
\providecommand{\bibinfo}[2]{#2}
\providecommand{\arxiv}[1]{arXiv:\urlalt{http://arxiv.org/abs/#1}{#1}}

\bibitemdeclare{inproceedings}{benton-wadler}
\bibitem{benton-wadler}
\bibinfo{author}{Nick \surnamestart Benton\surnameend} \&
  \bibinfo{author}{Philip \surnamestart Wadler\surnameend}
  (\bibinfo{year}{1996}): \emph{\bibinfo{title}{Linear Logic, Monads and the
  Lambda Calculus}}.
\newblock In: {\sl \bibinfo{booktitle}{Proceedings of the 11th Symposium on
  Logic in Computer Science, LICS'96}}, \bibinfo{organization}{IEEE},
  \bibinfo{publisher}{IEEE Computer Society Press}, pp.
  \bibinfo{pages}{420--431}, \doi{10.1109/LICS.1996.561458}.

\bibitemdeclare{misc}{benton-big}
\bibitem{benton-big}
\bibinfo{author}{P.N. \surnamestart Benton\surnameend} (\bibinfo{year}{1994}):
  \emph{\bibinfo{title}{A mixed linear and non-linear logic: Proofs, terms and
  models}}.
\newblock
  \urlprefix\url{http://www.cl.cam.ac.uk/~am21/grants/lomaps/pnbpapers/mixed3.ps.gz}.
\newblock \bibinfo{note}{Technical Report}.

\bibitemdeclare{inproceedings}{benton-small}
\bibitem{benton-small}
\bibinfo{author}{P.N. \surnamestart Benton\surnameend} (\bibinfo{year}{1995}):
  \emph{\bibinfo{title}{A mixed linear and non-linear logic: Proofs, terms and
  models}}.
\newblock In: {\sl \bibinfo{booktitle}{Proc. {CSL} '94, Selected Papers}},
  \bibinfo{publisher}{Springer}, pp. \bibinfo{pages}{121--135},
  \doi{10.1007/BFb0022251}.

\bibitemdeclare{book}{borceux:handbook1}
\bibitem{borceux:handbook1}
\bibinfo{author}{F.~\surnamestart Borceux\surnameend} (\bibinfo{year}{1994}):
  \emph{\bibinfo{title}{Handbook of Categorical Algebra 1: Basic Category
  Theory}}.
\newblock \bibinfo{publisher}{Cambridge University Press}. \doi{10.1017/CBO9780511525872}.

\bibitemdeclare{inproceedings}{Pfenning}
\bibitem{Pfenning}
\bibinfo{author}{L.~\surnamestart Caires\surnameend} \& \bibinfo{author}{F..
  \surnamestart Pfenning\surnameend} (\bibinfo{year}{2010}):
  \emph{\bibinfo{title}{Session types as intuitionistic linear propositions}}.
\newblock In: {\sl \bibinfo{booktitle}{Proc. CONCUR 2010}}, pp.
  \bibinfo{pages}{222--236}, \doi{10.1007/978-3-642-15375-4_16}.

\bibitemdeclare{unpublished}{ChoWesterbaan16}
\bibitem{ChoWesterbaan16}
\bibinfo{author}{K.~\surnamestart Cho\surnameend} \&
  \bibinfo{author}{A.~\surnamestart Westerbaan\surnameend}
  (\bibinfo{year}{2016}): \emph{\bibinfo{title}{Von {N}eumann Algebras form a
  Model for the Quantum Lambda Calculus}}.
\newblock \bibinfo{note}{Available as \arxiv{1603.02113}}.

\bibitemdeclare{book}{fiore-thesis}
\bibitem{fiore-thesis}
\bibinfo{author}{M.~P. \surnamestart Fiore\surnameend} (\bibinfo{year}{1994}):
  \emph{\bibinfo{title}{Axiomatic domain theory in categories of partial
  maps}} (Distinguished Dissertations in Computer Science).
\newblock \bibinfo{publisher}{Cambridge University Press}.
\doi{10.1017/CBO9780511526565}.

\bibitemdeclare{inproceedings}{fiore-plotkin}
\bibitem{fiore-plotkin}
\bibinfo{author}{Marcelo \surnamestart Fiore\surnameend} \&
  \bibinfo{author}{Gordon \surnamestart Plotkin\surnameend}
  (\bibinfo{year}{1994}): \emph{\bibinfo{title}{An Axiomatization of
  Computationally Adequate Domain Theoretic Models of {FPC}}}.
\newblock In: {\sl \bibinfo{booktitle}{Proc. {LICS}'94}},
  \bibinfo{publisher}{{IEEE}}, pp. \bibinfo{pages}{92--102},
  \doi{10.1109/LICS.1994.316083}.

\bibitemdeclare{article}{Kornell18}
\bibitem{Kornell18}
\bibinfo{author}{Andre \surnamestart Kornell\surnameend}
  (\bibinfo{year}{2020}): \emph{\bibinfo{title}{Quantum sets}}.
\newblock {\sl \bibinfo{journal}{J. Math. Phys.}}
  \bibinfo{volume}{61}(\bibinfo{number}{10}), p. \bibinfo{pages}{102202},
  \doi{10.1063/1.5054128}.

\bibitemdeclare{book}{kuperbergweaver:quantummetrics}
\bibitem{kuperbergweaver:quantummetrics}
\bibinfo{author}{G.~\surnamestart Kuperberg\surnameend} \&
  \bibinfo{author}{N.~\surnamestart Weaver\surnameend} (\bibinfo{year}{2012}):
  \emph{\bibinfo{title}{A {V}on {N}eumann {A}lgebra {A}pproach to {Q}uantum
  {M}etrics: {Q}uantum {R}elations}}.
\newblock {\sl \bibinfo{series}{Memoirs of the American Mathematical Society}}
  \bibinfo{volume}{215}, \bibinfo{publisher}{AMS},
  \doi{10.1090/S0065-9266-2011-00637-4}.

\bibitemdeclare{article}{lehman-smyth}
\bibitem{lehman-smyth}
\bibinfo{author}{Daniel~J \surnamestart Lehmann\surnameend} \&
  \bibinfo{author}{Michael~B \surnamestart Smyth\surnameend}
  (\bibinfo{year}{1981}): \emph{\bibinfo{title}{Algebraic specification of data
  types: A synthetic approach}}.
\newblock {\sl \bibinfo{journal}{Mathematical Systems Theory}}
  \bibinfo{volume}{14}, p. \bibinfo{pages}{97–139}, \doi{10.1007/BF01752392}.

\bibitemdeclare{inproceedings}{LMZ18}
\bibitem{LMZ18}
\bibinfo{author}{Bert \surnamestart Lindenhovius\surnameend},
  \bibinfo{author}{Michael \surnamestart Mislove\surnameend} \&
  \bibinfo{author}{Vladimir \surnamestart Zamdzhiev\surnameend}
  (\bibinfo{year}{2018}): \emph{\bibinfo{title}{Enriching a Linear/Non-linear
  Lambda Calculus: A Programming Language for String Diagrams}}.
\newblock In: {\sl \bibinfo{booktitle}{Proc. LICS'18}},
  \bibinfo{publisher}{ACM}, pp. \bibinfo{pages}{659--668},
  \doi{10.1145/3209108.3209196}.

\bibitemdeclare{article}{LMZ20}
\bibitem{LMZ20}
\bibinfo{author}{Bert \surnamestart Lindenhovius\surnameend},
  \bibinfo{author}{Michael~W. \surnamestart Mislove\surnameend} \&
  \bibinfo{author}{Vladimir \surnamestart Zamdzhiev\surnameend}
  (\bibinfo{year}{2021}): \emph{\bibinfo{title}{{LNL-FPC:} The
  Linear/Non-linear Fixpoint Calculus}}.
\newblock {\sl \bibinfo{journal}{Log. Methods Comput. Sci.}}
  \bibinfo{volume}{17}(\bibinfo{number}{2}), \doi{10.23638/LMCS-17(2:9)2021}.

\bibitemdeclare{unpublished}{mellies:categoricalmodelsofLL}
\bibitem{mellies:categoricalmodelsofLL}
\bibinfo{author}{P.-A. \surnamestart Melliès\surnameend}
  (\bibinfo{year}{2003}): \emph{\bibinfo{title}{Categorical models of linear
  logic revisited}}.
\newblock \bibinfo{note}{Available as hal-00154229.}

\bibitemdeclare{book}{NielsenChuang}
\bibitem{NielsenChuang}
\bibinfo{author}{M.~A. \surnamestart Nielsen\surnameend} \&
  \bibinfo{author}{I.~L. \surnamestart Chuang\surnameend}
  (\bibinfo{year}{2010}): \emph{\bibinfo{title}{Quantum Computation and Quantum
  Information}}, \bibinfo{edition}{10th anniversary edition} edition.
\newblock \bibinfo{publisher}{Cambridge University Press}, \doi{10.1017/CBO9780511976667}.

\bibitemdeclare{inproceedings}{quant-semantics}
\bibitem{quant-semantics}
\bibinfo{author}{Michele \surnamestart Pagani\surnameend},
  \bibinfo{author}{Peter \surnamestart Selinger\surnameend} \&
  \bibinfo{author}{Beno{\^{\i}}t \surnamestart Valiron\surnameend}
  (\bibinfo{year}{2014}): \emph{\bibinfo{title}{Applying quantitative semantics
  to higher-order quantum computing}}.
\newblock In: {\sl \bibinfo{booktitle}{Proc. {POPL}'14,}},
  \bibinfo{publisher}{{ACM}}, pp. \bibinfo{pages}{647--658},
  \doi{10.1145/2535838.2535879}.

\bibitemdeclare{inproceedings}{qwire}
\bibitem{qwire}
\bibinfo{author}{J.~\surnamestart Paykin\surnameend},
  \bibinfo{author}{R.~\surnamestart Rand\surnameend} \&
  \bibinfo{author}{S.~\surnamestart Zdancewic\surnameend}
  (\bibinfo{year}{2017}): \emph{\bibinfo{title}{{QWIRE:} a core language for
  quantum circuits}}.
\newblock In: {\sl \bibinfo{booktitle}{Proc. {POPL}'17}},
  \bibinfo{publisher}{{ACM}}, pp. \bibinfo{pages}{846--858},
  \doi{10.1145/3009837.3009894}.

\bibitemdeclare{inproceedings}{PPRZ19}
\bibitem{PPRZ19}
\bibinfo{author}{Romain \surnamestart P{\'{e}}choux\surnameend},
  \bibinfo{author}{Simon \surnamestart Perdrix\surnameend},
  \bibinfo{author}{Mathys \surnamestart Rennela\surnameend} \&
  \bibinfo{author}{Vladimir \surnamestart Zamdzhiev\surnameend}
  (\bibinfo{year}{2020}): \emph{\bibinfo{title}{Quantum Programming with
  Inductive Datatypes: Causality and Affine Type Theory}}.
\newblock In: {\sl \bibinfo{booktitle}{Proc. {FoSSaCS} 2020}}, {\sl
  \bibinfo{series}{Lecture Notes in Computer Science}} \bibinfo{volume}{12077},
  \bibinfo{publisher}{Springer}, pp. \bibinfo{pages}{562--581},
  \doi{10.1007/978-3-030-45231-5_29}.

\bibitemdeclare{article}{ewire-lmcs}
\bibitem{ewire-lmcs}
\bibinfo{author}{Mathys \surnamestart Rennela\surnameend} \&
  \bibinfo{author}{Sam \surnamestart Staton\surnameend} (\bibinfo{year}{2020}):
  \emph{\bibinfo{title}{Classical Control, Quantum Circuits and Linear Logic in
  Enriched Category Theory}}.
\newblock {\sl \bibinfo{journal}{Logical Methods in Computer Science}}
  \bibinfo{volume}{6}(\bibinfo{number}{1}), \doi{10.23638/LMCS-16(1:30)2020}.

\bibitemdeclare{inproceedings}{pqm-small}
\bibitem{pqm-small}
\bibinfo{author}{Francisco \surnamestart Rios\surnameend} \&
  \bibinfo{author}{Peter \surnamestart Selinger\surnameend}
  (\bibinfo{year}{2017}): \emph{\bibinfo{title}{A categorical model for a
  quantum circuit description language}}.
\newblock In: {\sl \bibinfo{booktitle}{Proc. QPL 2017}}, {\sl
  \bibinfo{series}{{EPTCS}}} \bibinfo{volume}{266}, pp.
  \bibinfo{pages}{164--178}, \doi{10.4204/EPTCS.266.11}.

\bibitemdeclare{article}{Sasaki54}
\bibitem{Sasaki54}
\bibinfo{author}{U.~\surnamestart Sasaki\surnameend} (\bibinfo{year}{1954}):
  \emph{\bibinfo{title}{Orthocomplemented Lattices Satisfying the Exchange
  Axiom}}.
\newblock {\sl \bibinfo{journal}{J. Sci. Hiroshima Univ. Ser. A}}
  \bibinfo{volume}{17}(\bibinfo{number}{3}), pp. \bibinfo{pages}{293--302},
  \doi{10.32917/hmj/1557281141}.

\bibitemdeclare{article}{selingervaliron:quantumlambda}
\bibitem{selingervaliron:quantumlambda}
\bibinfo{author}{P.~\surnamestart Selinger\surnameend} \&
  \bibinfo{author}{B.~\surnamestart Valiron\surnameend} (\bibinfo{year}{2006}):
  \emph{\bibinfo{title}{A lambda calculus for quantum computation with
  classical control}}.
\newblock {\sl \bibinfo{journal}{Mathematical Structures in Computer Science}}
  \bibinfo{volume}{16}(\bibinfo{number}{3}), pp. \bibinfo{pages}{527--552},
  \doi{10.1017/S0960129506005238}.

\bibitemdeclare{book}{takesaki:oa1}
\bibitem{takesaki:oa1}
\bibinfo{author}{M.~\surnamestart Takesaki\surnameend} (\bibinfo{year}{2000}):
  \emph{\bibinfo{title}{Theory of {O}perator {A}lgebra {I}}}.
\newblock \bibinfo{publisher}{Springer}.

\bibitemdeclare{unpublished}{wikipedia}
\bibitem{wikipedia}
\bibinfo{author}{\surnamestart Trenar3\surnameend}:
  \emph{\bibinfo{title}{Quantum Fourier transform}}.
\newblock \bibinfo{note}{Wikimedia, available at
  \url{https://commons.wikimedia.org/wiki/File:Q_fourier_nqubits.png}}.

\bibitemdeclare{article}{Weaver10}
\bibitem{Weaver10}
\bibinfo{author}{N.~\surnamestart Weaver\surnameend} (\bibinfo{year}{2012}):
  \emph{\bibinfo{title}{Quantum relations}}.
\newblock {\sl \bibinfo{journal}{Mem. Amer. Math. Soc.}} \bibinfo{volume}{215},
  pp. \bibinfo{pages}{v-vi, 81-140}, \doi{10.1090/S0065-9266-2011-00637-4}.

\bibitemdeclare{article}{Weaver19}
\bibitem{Weaver19}
\bibinfo{author}{N.~\surnamestart Weaver\surnameend} (\bibinfo{year}{2020}):
  \emph{\bibinfo{title}{Hereditarily antisymmetric operator algebras}}.
\newblock {\sl \bibinfo{journal}{J. Inst. Math. Jussieu}} \bibinfo{volume}{20},
  pp. \bibinfo{pages}{1039--1074}, \doi{10.1017/S1474748019000483}.

\end{thebibliography}

\end{document}